\newcommand{\real}{\mathbb{R}}
\newtheorem{theorem}{Theorem}
\newtheorem{remark}{Remark}
\newtheorem{assn}{Assumption}
\newtheorem{proposition}{Proposition}
\newtheorem{definition}{Definition}
\newtheorem{convention}{Convention}
\newcommand{\until}[1]{\{1,\dots, #1\}}
\newcommand{\subscr}[2]{#1_{\textup{#2}}}
\newcommand{\supscr}[2]{#1^{\textup{#2}}}
\newcommand{\map}[3]{#1: #2 \rightarrow #3}
\begin{document}
\title{Asynchronous and Dynamic Coverage Control Scheme for Persistent Surveillance Missions\thanks{
This work has been sponsored by
the U.S.\ Army Research Office and the Regents of the University of
California, through Contract Number W911NF-09-D-0001 for the Institute for
Collaborative Biotechnologies, and that the content of the information does
not necessarily reflect the position or the policy of the Government or the
Regents of the University of California, and no official endorsement should
be inferred.}} 
\author{Jeffrey R. Peters, Sean J. Wang\thanks{Jeffrey R. Peters and Sean J. Wang are with the Mechanical Engineering Department and the Center for Control, Dynamical Systems and Computation, University of California, Santa Barbara \texttt{jrpeters@engr.ucsb.edu}, \texttt{seanwang@umail.ucsb.edu}}, Amit Surana\thanks{Amit Surana is with the Systems Department at United Technologies Research Center, East Hartford, CT, \texttt{SuranaA@utrc.utc.com}}, and Francesco Bullo\thanks{Francesco Bullo is with the Mechanical Engineering Department and the Center for Control, Dynamical Systems and Computation, University of California, Santa Barbara \texttt{bullo@engr.ucsb.edu}}}
\maketitle
\begin{abstract}
A \emph{decomposition-based} coverage control scheme is proposed for multi-agent, persistent surveillance missions operating in a communication-constrained, dynamic environment. 
The proposed approach decouples high-level task assignment from low-level motion planning in a modular framework. 
Coverage assignments and surveillance parameters are managed by a central base station, and transmitted to mobile agents via unplanned and asynchronous exchanges. Coverage updates promote load balancing, while maintaining geometric and temporal characteristics that allow effective pairing with generic path planners. Namely, the proposed scheme guarantees that (i) coverage regions are connected and collectively cover the environment, (ii) subregions may only go uncovered for bounded periods of time, (iii) collisions (or sensing overlaps) are inherently avoided, and (iv) under static event likelihoods, the collective coverage regions converge to a Pareto-optimal configuration. This management scheme is then paired with a generic path planner satisfying loose assumptions. The scheme is illustrated through simulated surveillance missions.
\end{abstract}

\section{Introduction}
\label{sec:introduction}
\subsection{Decomposition-Based Multi-Agent Surveillance}
Modern exploratory and surveillance missions often utilize autonomous vehicles or sensors to observe and monitor
large geographic areas. Example application domains include search and rescue~\cite{AM-GN-BB:11}, environmental monitoring~\cite{RNS-YC-PPL-DAC-BHJ-GSS:10}, warehouse logistics~\cite{PRW-RdA-MM:08}, and military reconnaissance \cite{SRD-CDW:06}. Such scenarios require robust and flexible tools for autonomous coordination.

A number of planning strategies have been studied for single agents, ranging from simple \emph{a priori} tour construction~\cite{FP-JWD-FB:11h} to more complex methods involving Markov chains~\cite{VS-FP-FB:11za}, optimization~\cite{NM-SLS-SLW:15}, or Fourier analysis~\cite{GM-AS-IM:10}. Unfortunately, it is not straightforward to generalize single-agent strategies for use in multi-agent missions: Naive approaches where each agent follows an independent policy can result in poor performance and introduce collision risks, while sophisticated generalizations often require joint optimizations that are intractable for even modestly sized problems. Scaling issues are sometimes alleviated through distributed control; however, such setups are application specific and may require extensive efforts to pose a mathematical problem that is suitable for use with formal techniques~\cite{NN:14}. In contrast, \emph{decomposition-based} approaches, which decouple the assignment and routing problem by first dividing the workspace among agents, offer a straightforward, modular framework to reasonably accomplish the desired goals, despite sacrificing optimality in general.



Communication constraints can, however, make it difficult to effectively divide a dynamic workspace in real time, while still ensuring that the result is amenable for pairing with single-agent motion planners. For example, many unmanned missions require agents to transfer sensor data to a central base station for analysis. With difficult terrain or hardware limitations, these sporadic exchanges with the base station may provide the only means of sharing real-time information across agents. Applications that operate under this constraint include underwater gliders that must surface to communicate with a tower~\cite{AP-HH-GS:10}, data mules that periodically visit ground robots~\cite{RCS-SR-WB:03}, and supervisory missions where human operators analyze data collected by unmanned vehicles~\cite{JP-VS-AS-GT-AS-MPE-FB:12t}. Here, updated mission information can only be relayed to one agent at a time, rendering traditional dynamic partitioning schemes, which rely on complete or pairwise coverage updates, impossible. As such, designers are forced to either allow for overlap in coverage assignments, or sacrifice complete coverage (Fig.~\ref{fig:motivation}).
\begin{figure}
\centering
\includegraphics[width = 0.8\columnwidth]{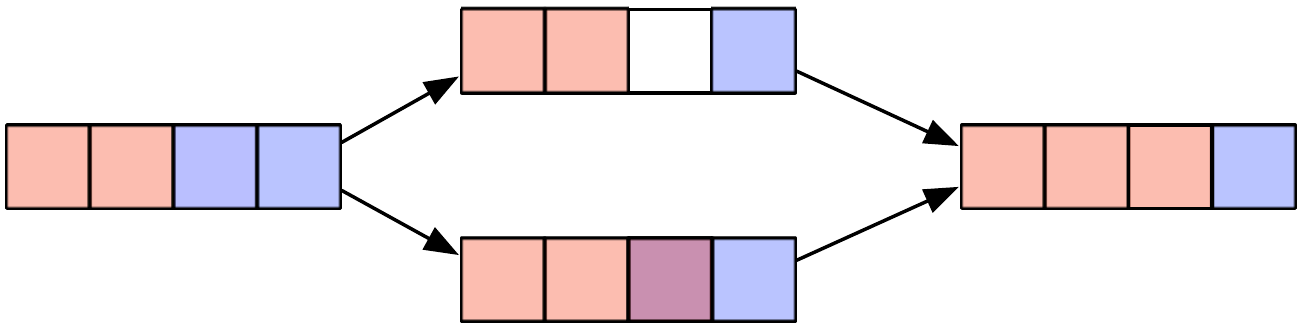}
\caption{Difficulties arise when complete or pairwise updates are impossible. With only single-region updates, two steps are required to move from the left-most to the right-most configuration. In the top path, blue is updated first, leaving an uncovered block at the intermediate step. In the bottom path, red is updated first, leaving a block that belongs to both regions simultaneously.}
\label{fig:motivation}
\end{figure}

This work presents a decomposition-based surveillance framework that operates under asynchronous \emph{one-to-base station} communication, a protocol in which data is transferred solely via sporadic exchanges between a central base station and autonomous agents~\cite{RP-PF-JWD-RC-FB:13w}. 
The complete scheme consists of two components: a dynamic partitioning component and a single-agent routing component. The work herein focuses primarily on developing the partitioning component, which, in addition to dividing the workload among the agents, manipulates local variables in order to allow for effective pairing with single-agent route planners. 
This development naturally leads to a complete multi-agent framework in which high-level coverage is coordinated by the base station, while individual trajectories are generated independently via on-board planners.  

\subsection{Related Literature}
\label{sec:related}
Research relating to multi-agent coverage control is vast. Typical strategies involve optimization~\cite{JP-SA:05}, auctions~\cite{BPG-KJM:02}, biological meta-heuristics~\cite{EB-MD-GT:99},  potential fields~\cite{JHR-HW:99}, or space decomposition~\cite{NN-IK:08}. Of particular relevance are multi-agent \emph{persistent surveillance} (\emph{persistent monitoring}) problems, in which a mobile sensor team is tasked with continual surveillance of a region of interest, requiring each subregion to be visited multiple (or infinitely many) times with the goal of minimizing a cost, e.g., the time between visits or the likelihood of detecting stochastic events~\cite{NN:14}.  Persistent surveillance is a generalization of \emph{patrolling}, where closed tours are sought for the purpose of protecting or supervising an environment. Patrolling has been studied extensively, with most current solutions being based on operations research algorithms, non-learning multi-agent systems, and multi-agent learning~\cite{AA-GC-HS-PT-TM-VC-YC:04}. Patrolling formulations are often one-dimensional and solutions usually reduce to ``back and forth'' motions that do not readily extend to general scenarios (see, e.g.~\cite{FP-FZ-JRP-MS-RC-FB:12l}).

The framework proposed herein is based on \emph{workspace decomposition}, and uses an allocation strategy to reduce the multi-agent problem into a set of single-agent problems. This approach is common in multi-agent systems due to its simplicity and scalability~\cite{NN:14}. For persistent surveillance of planar regions, decomposition-based approaches consist of two primary components: partitioning and single-agent routing. The most common approaches to optimal partitioning are based on Voronoi partitions~\cite{AO-BB-KS-SNC:00}. Effective schemes exist for constructing centroidal Voronoi, equitable, or other types of optimal partitions under various communication, sensing, and workload constraints~\cite{JC-SM-TK-FB:02j, RP-PF-RB:13i,JC:10}. 
These strategies usually require convex environments. More general workspaces are usually addressed by discretizing the environment and considering the resulting graph, on which any number of graph partitioning schemes can be used~\cite{POF:98}. In robotic contexts, the discrete partitioning problem is often considered under communication constraints, e.g., pairwise gossip~\cite{JWD-RC-PF-FB:09w} or asynchronous one-to-base station communication~\cite{RP-PF-JWD-RC-FB:13w}. The proposed  partitioning scheme most closely mirrors~\cite{RP-PF-JWD-RC-FB:13w}; however, our approach employs additional logic to ensure the resultant coverage regions are amenable for use in decomposition-based surveillance.

Single-agent path planners for persistent surveillance most commonly operate over discrete environments, i.e., graphs~\cite{GL:92,PT-DV:01}, and solutions to classical problems, e.g., the Traveling Salesperson Problem~\cite{GG-APP:07}, may suffice. Stochasticity can be introduced using tools such as Markov chains~\cite{VS-FP-FB:11za}. Persistent surveillance strategies for non-discrete regions (in particular, open subsets of Euclidean space), are less common. Here, strategies include the \emph{a priori} construction of motion routines~\cite{JO-MGE:07}, the adaptation of static coverage strategies~\cite{DES-MS-DS:12}, the use of random fields~\cite{XL-MS:13}, and spectral decomposition~\cite{GM-AS-IM:10}. The modular framework herein does not require any particular single-agent route planner, but rather can incorporate any planner satisfying a mild assumption set (see Section~\ref{sec:routing}).

Despite the vast research on both topics individually, remarkably few papers explicitly consider the implications of combining geometric partitioning with continuous routing in the context of multi-agent persistent surveillance. 
Research that does exist is mostly preliminary, considering ideal communication and employing simplistic methods. For example, the authors of~\cite{JFA-PBS-JBS:13} employ a sweeping algorithm for partitioning and guide vehicle motion via lawn-mower pattens. The authors of~\cite{NN-IK:08} use rectangular partitions, while employing a reactive routing policy which, in ideal cases, reduces to spiral search patterns. The work in~\cite{IM-AO:07} uses slightly more sophisticated partitioning in tandem with lawn-mower motion trajectories. Also relevant is~\cite{JW-JKH:11}, where partitions are based on the statistical expectation of target presence; however, ideal communication is assumed. Other works, e.g.~\cite{BB-MV-JPH:08}, employ decomposition-based structures, but focus on task-assignment with no detailed treatment of the combined assignment/routing protocol. 

\subsection{Contributions}
\label{sec:contributions}
This work presents a decomposition-based, multi-agent coverage control framework for persistent surveillance, which requires only sporadic, unscheduled exchanges between agents and a central base station. In particular, we focus on developing a sophisticated partitioning and coordination scheme that is designed for pairing with generic single-agent trajectory planners within a modular framework. Our setup encompasses realistic constraints including restrictive communication, dynamic environments, and non-parametric event likelihoods. 

Specifically, we develop a dynamic partitioning scheme that assumes only asynchronous, \emph{one-to-base station} communication. This algorithm governs region assignments, while also introducing timing variables and manipulating the agents' high-level surveillance parameters. We prove that our partitioning strategy has properties that make it amenable for use in a decomposition-based surveillance scheme: the produced coverage regions collectively form a connected $m$-covering, local likelihood functions have disjoint support, no subregion remains uncovered indefinitely, coverage regions evolve at a time-scale that allows for appropriate agent reactions, among others. 
For static likelihoods under mild assumptions, we show that the set of coverage regions and associated generators converges to a Pareto optimal partition in finite time. 
We combine our partitioning scheme with a generic single-agent trajectory, and show that this combination guarantees collision avoidance (no sensing overlap) when the trajectory planner obeys natural restrictions. We illustrate our framework through numerical examples. 

For clarity and readability in what follows, we postpone all Theorem proofs until the Appendix.

\section{Mission Objectives and Solution Approach}
\label{sec:objectives}
A team of $m$ mobile agents, each equipped with an on-board sensor, is tasked with endlessly monitoring a non-trivial, planar, large region of interest. The primary goal of the mission is to collect sensor data about some key dynamic event or characteristic, e.g., an intruder. Data collected by the agents is periodically relayed to a central base station for analysis.
Agents must move within the region to obtain complete sensory information.  
Ideally, the agent motion/data collection strategy should be coordinated so that:  
\begin{enumerate}
\item the workload is balanced across agents,
\item no subregion goes unobserved indefinitely,
\item agents never collide (have sensor overlap),  and
\item the search is biased toward regions of greater interest.
\end{enumerate}
To achieve these goals, we adopt a decomposition-based approach in which each agent's motion is restricted to lie within a dynamically assigned \emph{coverage region}. 
\begin{figure}
\centering
\includegraphics[width = 0.9\columnwidth]{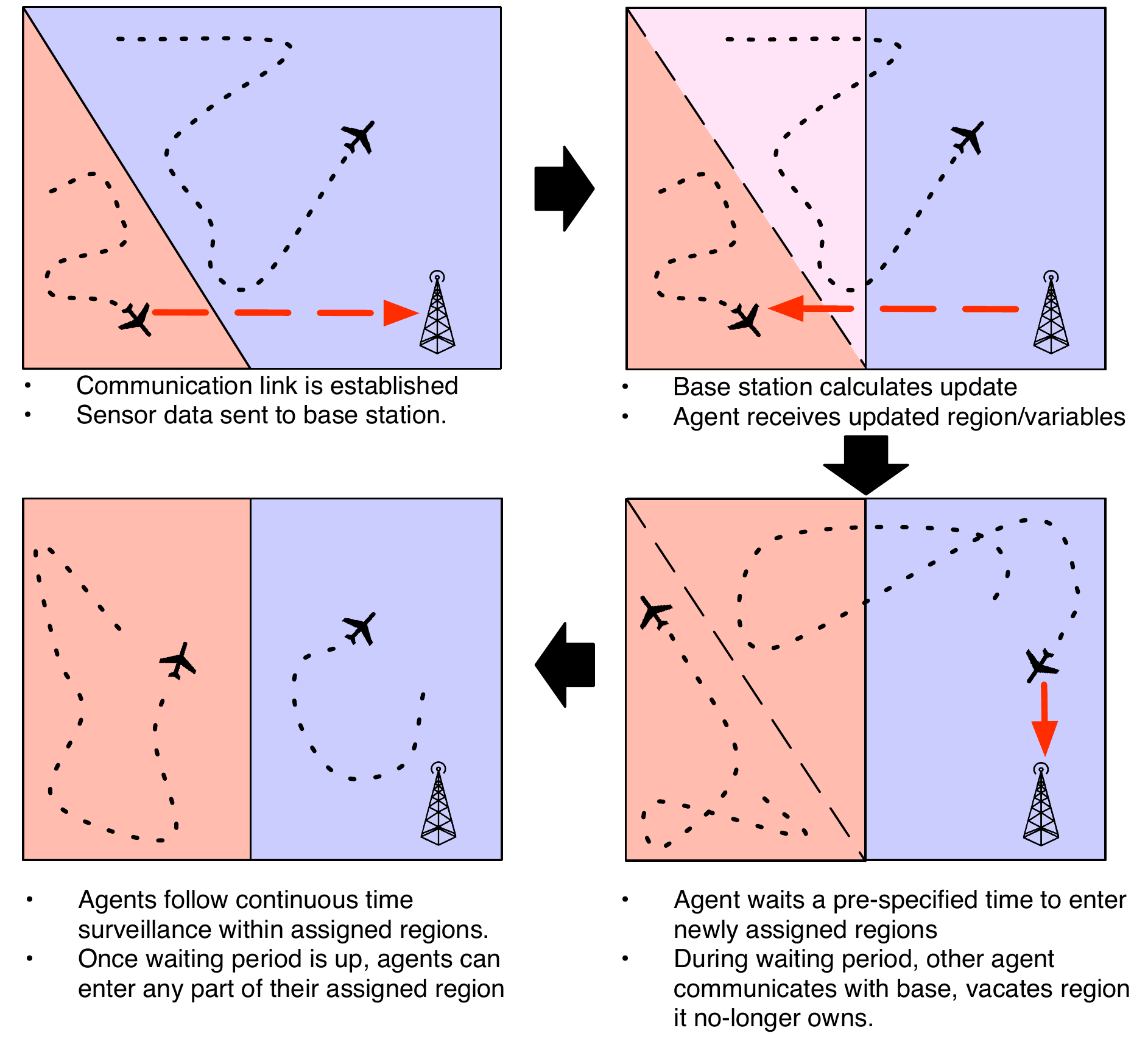}
\caption{An example illustrating the proposed strategy. The dynamic partitioning component (executed by the base station) manages coverage regions and introduces logic to prevent collisions, while the routing component (executed onboard each agent) governs agent motion.}
\label{fig:schematic}
\end{figure}

The analysis herein focuses primarily on developing an algorithmic framework for partitioning that defines, maintains, and updates the agents' coverage assignments in a manner that can be effectively paired with a typical single-agent route planner. More specifically, we seek a scheme to assign coverage responsibilities, as well as provide high-level restrictions on agent motion, so that lower-level, on-board trajectory planners can operate independently within the proposed boundaries to achieve global surveillance goals. In addition, we seek algorithms that do not require peer-to-peer communication or require the agents to be in constant contact with the base station. As such, we assume only asynchronous, \emph{one-to-base} station communication (data transfer)~\cite{RP-PF-JWD-RC-FB:13w}, i.e., agents are only assumed to sporadically communicate with a central base station, subject to an upper bound on inter-communication times, which are not specified \emph{a priori}.

%
%
%

Broadly, our proposed solution operates as follows (see Fig.~\ref{fig:schematic}). During each exchange with an agent, the base-station calculates the new coverage region solely for the communicating agent, and transmits the update. Updates are performed so as to guarantee connected coverage regions, and, in the limit for static likelihoods, distribute the coverage load among the agents. During the exchange, the base station also updates timing variables and governs the local likelihood functions maintained by each agent so that, if no agent were ever located outside of the support of its local likelihood at any time, then the agents (sensors) would never be colocated. Timing considerations also guarantee that 1) no region remains uncovered, i.e., outside of the allowable area to which the agents can travel, for an unbounded amount of time, and 2) ample time is allotted for agents to vacate regions that are re-assigned before the newly assigned agent takes over.  
Once the update is complete, the link with the base is terminated and the agent follows the trajectory found via its onboard planner. 

\section{Problem Setup}
\label{sec:definitions}
The base station, as well as each agent, has its own local processor. 
``Global'' information is stored by the base station, while each agent only stores information pertinent to itself. 
\begin{convention}
The subscripts $i, j$, or $\ell$ denote an entity relevant to agent $i,j$, or $\ell$, resp. When attached to a set, the subscripts $i,j,\ell$ indicate the component relevant to agent $i,j$ or $\ell$, resp. The superscript `$A$' indicates an entity that is stored on the agent's local processor.
\end{convention}
A summary of the variables stored by relevant entities is shown in Table~\ref{tab:storage}. We expand on these, and define other relevant mathematical constructs here.

\subsection{Agent Dynamics}
\label{sec:dynamics}

Each agent (sensor) $i$ is modeled as a point mass that moves with maximum possible speed $s_i>0$. Define $s \coloneqq \{s_i\}_{i = 1}^m$.

\subsection{Communication Protocol}
\label{comm}
There is a central base station with which the agents communicate. We assume the following:
 \begin{enumerate}
 \item each agent can identify itself to the base station and transmit sensor data,
  \item there is a lower bound $\underline{\Delta}>0$ on the time between any two successive exchanges involving the base station, and
 \item there is an upper bound $\overline{\Delta}>0$ on the time between any \emph{single agent's} successive exchanges with the base. 
 \end{enumerate}
 Note that 2) implies that no two agents can communicate with the base station simultaneously\footnote{This constraint also prevents \emph{zeno behavior} in the communication times}. These communication specifications collectively define the \emph{one-to-base station} protocol.

\subsection{Surveillance Region}
\label{sec:environment}

Consider the surveillance region as a finite grid of disjoint, non-empty, simply-connected subregions. We represent the environment as a weighted graph $G(Q)\coloneqq (Q,\mathcal{E})$, where $Q$ is the set of vertices (each representative of a unique grid element), and $\mathcal{E}$ is the edge set comprised of undirected, weighted edges $\{k_1,k_2\}$ spanning vertices representing \emph{adjacent regions}\footnote{Travel between the regions without entering other subregions is possible}. The weight assigned to $\{k_1, k_2\}$ is the distance between the appropriate regions.  We assume without loss of generality that $Q \subset \mathbb{N}$. Environmental information, e.g., terrain, is largely unknown at the mission onset; however, the location of obstacles and prohibited areas, is known \emph{a priori} and is not included in the graphical representation $G(Q)$. 

Consider $Q' \subseteq Q$. A vertex $k_1\in Q$ is \emph{adjacent} to $Q'$ if $k_1 \notin Q'$ and there exists $\{k_1, k_2\} \in \mathcal{E}$ with $k_2 \in Q'$. Define $G(Q') \coloneqq (Q',\mathcal{E}')$, where $\mathcal{E}' \coloneqq \{\{k_1, k_2\} \in \mathcal{E} | k_1, k_2 \in Q'\}$. A \emph{path} on $G(Q')$ between $k_1, k_n \in Q'$ is a sequence $(k_1, k_2, \ldots, k_n)$, where $k_1, k_2, \ldots, k_n \in Q'$ and $\{k_r,k_{r+1}\} \in \mathcal{E}'$ for $r \in \until {n-1}$. We say $Q'$ is \emph{connected} if a path exists in $G(Q')$ between any $k_1, k_2 \in Q'$.
%
%
%
%
Let $\map{d_{Q'}}{Q'\times Q'}{\mathbb{R}_{\geq 0} \cup \{\infty\}}$ be the standard distance on $G(Q')$, i.e., the length of a shortest weighted path in $G(Q')$ (if none exists, $d_{Q'}$ takes value $\infty$). Notice $d_{Q'}(k_1, k_2) \leq d_{Q}(k_1, k_2)$ for any $k_1, k_2 \in Q'$. Also let $d_{Q'}$ denote the map $\map{d_{Q'}}{Q'\times 2^{Q'}}{\real_{\geq 0} \cup \{\infty\}}$, where $d_{Q'}(k,Q'')$ is the length of a shortest weighted path in $G(Q')$ between $k$ and any vertex in $Q''$.

 \subsection{Coverage Regions}
\label{sec:partition_setup}

An \emph{$m$-covering} of $Q$ is a family $P=\{P_i\subseteq Q\}_{i = 1}^m$ satisfying
1) $\bigcup_{i=1}^m P_i = Q$, and
2) $P_i \neq \emptyset$ for all $i$. Define $\text{Cov}_m(Q)$ as the set of all possible $m$-coverings of $Q$.
An \emph{$m$-partition} of $Q$ is an $m$-covering that also satisfies
3) $P_i\bigcap P_j = \emptyset$, if $i\neq j$.
An $m$-covering or $m$-partition $P$ is $\emph{connected}$ if each $P_i$ is connected.
In what follows, the base station maintains an $m$-covering $P$ of $Q$, and surveillance responsibilities are assigned by pairing each agent $i$ with $P_i \in P$ (called agent $i$'s \emph{coverage region}). Each agent maintains a copy $P_i^A$ of $P_i$. The base station also stores a set $c \coloneqq\{c_i \in Q\}_{i = 1}^m$ ($c_i$ is the \emph{generator} of $P_i$), and each agent $i$ maintains a copy $c_i^A$ of $c_i$.

 \subsection{Identifiers, Timers, and Auxiliary Variables}
 \label{sec:timing}
 
The proposed algorithms introduce logic and timing considerations to ensure an effective overall framework. 
To each $k \in Q$, assign an \emph{identifier} $\text{ID}_k \in \until m$. Define $\mathcal{ID} \coloneqq \{\text{ID}_k\}_{k = 1}^ {|Q|}$, and let $P^{\mathcal{ID}} : = \{P^\mathcal{ID}_{i}\}_{i = 1}^m$, where $P^\mathcal{ID}_{i}: = \{k \in Q\mid \text{ID}_k = i\}$. Notice $P^{\mathcal{ID}}$ is an $m$-partition of $Q$. 
For each agent $i$, define a timer $T_i$ having dynamics 
$\dot{T}_i = -1$ if $T_i \neq 0$, and $\dot{T}_i = 0$ otherwise. The set $T\coloneqq\{T_i\}_{i=1}^{m}$ is maintained by the base station. Similarly, each agent $i$ maintains a local timing variable $\tau^A_i$. Even though $\tau^A_i$ plays a similar role to the timer $T_i$, note that $\tau^A_i$ is constant unless explicitly changed by the algorithm, while $T_i$ has autonomous dynamics. 
Next, the base station maintains a set $\omega \coloneqq \{\omega_i\}_{i = 1}^m,$ where $\omega_i$ is the time of agent $i$'s last exchange with the base station. Each agent maintains a copy $\omega_i^A$ of $\omega_i$. Finally, each agent stores locally a subset $\supscr{P_i}{$A$,pd} \subseteq P_i^A$ which, loosely, collects vertices that have recently been added to $P_i^A$. 

\subsection{Likelihood Functions}
\label{sec:prior}
A real-time estimate of the likelihood of events of interest occurring within any subset of the surveillance region is maintained by the base station in the form of a time-varying probability mass function\footnote{For any $t$, we have $\sum_{k \in Q} \Phi(k,t)$ = 1} $\map{\Phi}{Q \times \real_{\geq 0}}{\real_{\geq 0}}$.
Since it is difficult to precisely define load-balancing requirements over arbitrarily time-varying environments, our analysis focuses on achieving optimal performance in the case of a static environment, i.e., when $\Phi$ is time-invariant. We then show through simulation that our proposed strategy still can be effectively employed in the case of a \emph{quasi-static} environment, i.e., when $\Phi(k,\cdot)$ is piecewise constant for any $k$. This situation arises in many common scenarios, e.g., when estimates of $\Phi$ are only updated during exchanges between the base station and the mobile agents, which occur at discrete time-points.

 Define each agent's \emph{local likelihood} $\map{\Phi^A_i}{Q \times \real_{\geq 0} }{\real_{\geq 0}}$ as the function that, loosely, represents the agent's local belief regarding events of interest occurring in $Q$. The partitioning algorithm will manipulate the support of each $\Phi^A_i$ to ensure effective surveillance while also inherently preventing agent collisions (sensor redundancy). Specifically, define
\begin{multline}
\Phi^A_{i}(k,t) = \begin{cases} \Phi(k,t),& \begin{array}{c}\text{ if }k \in P_i^A\text{ and}\\ \left(\cramped{t-\omega^A_i\geq\tau^A_{i}\text{ or } k \notin \supscr{P_i}{$A$,pd}}\right),\end{array}  \\
\\ 0,&\text{otherwise}. \end{cases}
\label{eqn:phi_temp}
\end{multline}
In general, each function $\Phi^A_i$ will be different\footnote{Note that $\Phi^A_i$ need not be normalized and thus may not be a time-varying probability mass function in a strict sense} from $\Phi$. 
 \begin{remark}[Global Data]
If global knowledge of $\Phi$ is not available instantaneously to agent $i$, $\Phi^A_i$ can alternatively be defined by replacing $\Phi(k,t)$ in~\eqref{eqn:phi_temp} by $\Phi(k, \omega_i^A)$. All subsequent theorems hold under this alternative definition.
\label{rem:global}
\end{remark}
\begin{table}
\centering
 \caption{Storage Summary}
 \begin{tabular}{cl}
 \toprule
 \multicolumn{2}{c}{Stored by Base Station} \\
 \toprule
 Variable & Description \\
 \midrule
 $G(Q)$&Graphical representation of the environment\\
 $P$ & $m$-covering of $Q$ $\left(P\in \text{Cov}_m(Q)\right)$\\
 $c$ & Set of generators $\left(c \in Q^m\right)$ \\
 $\mathcal{ID}$ & Set of identifiers $\left(\mathcal{ID} \in {\until m}^{|Q|}\right)$\\ 
 $T$ & Set of Timers $\left(T(t) \in \real_{\geq 0}^{m}\right)$ \\
 $\omega$ & Set of most recent communication times $\left(\omega \in \mathbb{R}^m_{\geq 0}\right)$ \\
 $\Phi$ & Global likelihood function $\left(\map{\Phi}{Q\times \real_{\geq 0}}{\real_{\geq 0}}\right)$ \\
 \toprule
 \multicolumn{2}{c}{Stored by Agent $i$} \\ 
 \toprule
  Variable & Description \\
 \midrule
  $G(Q)$&Graphical representation of the environment\\
 $P_i^A$ & Coverage region $\left(P_i^A \subset Q\right)$\\
 $c_i^A$ & Coverage region generator $\left(c_i^A \in Q\right)$\\
 $\supscr{P_i}{$A$,pd}$ & Set of ``recently added'' vertices $\left(\supscr{P_i}{$A$,pd} \subseteq P_i^A\right)$ \\
 $\tau^A_i$ & Local timing parameter $\left(\tau^A_i \in \real\right)$ \\
 $\omega^A_i$ & Agent $i$'s most recent communication time $\left(\omega^A_i \in \real_{\geq 0}\right)$ \\
 $\Phi^A_i$ & Local likelihood function $\left(\map{\Phi^A_i}{Q\times \real_{\geq 0}}{\real_{\geq 0}}\right)$\\\bottomrule
 \end{tabular} 
 \label{tab:storage}
 \end{table}
%
%

\begin{remark}[Data Storage]
The cost of storing a graph as an adjacency list is $O(|Q|+|\mathcal{E}|)$. The generator set $c$,  each element of $P$, and the identifier set $\mathcal{ID}$ are stored as a integral vectors. The timer set $T$ and the set $\omega$ are are stored as real vectors. Typically, only the values of $\Phi$ at vertices in $Q$ are needed, so $\Phi$ is stored as a time-varying real vector. Thus, the cost of storage at the base station is $O(m|Q|+|\mathcal{E}|)$. Similarly, each agent's local storage cost is $O(|Q|+|\mathcal{E}|)$.
\label{rem:data_storage}
\end{remark}


 \section{Dynamic Coverage Update Scheme}
 \label{sec:partitioning}
%
We adopt following convention for the remaining analysis.
 \begin{convention}
Suppose that:
 \begin{enumerate} 
 \item $\min \emptyset \coloneqq \max \emptyset \coloneqq 0$, and
 \item when referring to a specific time instant, e.g., the instant when an update is executed, the notation $e^+$ and $e^-$ refers to the value of the entity $e$ immediately before and after the instant in question, resp.
 \end{enumerate}
 \label{con:timed}
 \end{convention}

 \subsection{Additive Subset}
\label{sec:additive}
We start with a definition.
\begin{definition}[Additive Subset]
Given $k \in P_i^{\mathcal{ID}}$, the \emph{additive subset} $\supscr{P_i}{add}(k) \subseteq Q$ is the largest connected subset satisfying:
\begin{enumerate}
\item $P_i^{\mathcal{ID}} \subseteqq \supscr{P_i}{add}(k)$, and 
\item for any $h\in \supscr{P_i}{add}(k) \cap P_j$, where $j \neq i$:
\begin{enumerate}
\item $T_j = 0$, and 
\item $\frac{1}{s_i}d_{\supscr{P_i}{add}(k)}(h, k) < \frac{1}{s_j}d_{P_j}(h, c_j)$.
\end{enumerate}
\end{enumerate}
\label{def:P_plus}
\end{definition}
The following characterizes well-posedness of Definition~\ref{def:P_plus}.
\begin{proposition}[Well-Posedness]
If $P_i^{\mathcal{ID}}$ is connected and disjoint from $\bigcup_{j \neq i} P_j$, then $\supscr{P_i}{add}(k)$ exists and is unique for any $k \in P_i^{\mathcal{ID}}$.
\label{prop:well-posed}
\end{proposition}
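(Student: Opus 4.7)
The plan is to exhibit $\supscr{P_i}{add}(k)$ as the union of a finite, non-empty family of candidate sets that is closed under pairwise union. Concretely, let
\[
\mathcal{F} \coloneqq \{S \subseteq Q \mid S \text{ is connected and satisfies conditions (1) and (2) of Definition~\ref{def:P_plus}}\}.
\]
I will first show $\mathcal{F} \neq \emptyset$, then show $\mathcal{F}$ is closed under pairwise union, and finally invoke the finiteness of $Q$ to conclude that $\bigcup_{S \in \mathcal{F}} S \in \mathcal{F}$ is the unique largest element.

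For non-emptiness, I would take $S = P_i^{\mathcal{ID}}$ itself. Condition (1) is immediate. Condition (2) is vacuous: because $P_i^{\mathcal{ID}}$ is (by hypothesis) disjoint from $\bigcup_{j \neq i} P_j$, the index set over which (2) quantifies is empty. Connectedness is given. Hence $P_i^{\mathcal{ID}} \in \mathcal{F}$.

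The main obstacle is closure under pairwise union, and in particular verifying condition 2(b) for $A \cup B$ when $A, B \in \mathcal{F}$. The subtlety is that $d_{\supscr{P_i}{add}(k)}(\cdot,\cdot)$ depends on the very set being defined, so one must be careful that enlarging the set does not violate the distance inequality. The key observation is monotonicity of graph distance with respect to the underlying vertex set: since $G(A)$ is a subgraph of $G(A \cup B)$, every path in $G(A)$ is a path in $G(A \cup B)$, so
\[
d_{A \cup B}(h, k) \leq d_A(h, k) \qquad \text{for all } h, k \in A.
\]
Thus if $h \in (A \cup B) \cap P_j$ with $j \neq i$, then $h$ lies in $A \cap P_j$ or $B \cap P_j$ (or both); in either case, condition 2(b) for that set combined with the above monotonicity yields condition 2(b) for $A \cup B$. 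Condition 2(a) passes through trivially because $T_j = 0$ is a statement about $j$, not about the set. Containment of $P_i^{\mathcal{ID}}$ is clear, and $A \cup B$ is connected because $A$ and $B$ are connected and share the non-empty set $P_i^{\mathcal{ID}}$. Hence $A \cup B \in \mathcal{F}$.

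Finally, because $Q$ is finite, the power set $2^Q$ is finite, so $\mathcal{F}$ is finite. Iterating pairwise closure over the elements of $\mathcal{F}$ gives $S^{\star} \coloneqq \bigcup_{S \in \mathcal{F}} S \in \mathcal{F}$, and by construction $S \subseteq S^{\star}$ for every $S \in \mathcal{F}$. Therefore $S^{\star}$ is the unique largest connected subset satisfying (1) and (2), which is precisely $\supscr{P_i}{add}(k)$.
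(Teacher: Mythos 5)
Your proof is correct and follows essentially the same route as the paper's: exhibit $P_i^{\mathcal{ID}}$ as a valid candidate (with condition (2) holding vacuously by the disjointness hypothesis) and conclude that the maximal candidate exists and is unique. The paper simply asserts that the ``unique, maximally connected superset'' exists, whereas you supply the justification it leaves implicit --- closure of the candidate family under pairwise union, with the monotonicity $d_{A\cup B}(h,k)\leq d_A(h,k)$ handling the self-referential distance in condition 2(b) --- which is exactly the right detail to make the argument airtight.
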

\begin{proof}
With the specified conditions, $P_i^{\mathcal{ID}}$ is connected and satisfies conditions $1$-$3$ in Definition~\ref{def:P_plus}.  Thus, $\supscr{P_i}{add}(k)$ is the unique, maximally connected superset of $P_i^{\mathcal{ID}}$ satisfying the same conditions.
\end{proof}
Under the conditions of Proposition~\ref{prop:well-posed}, if $h \in\supscr{P_i}{add}(k)$, then $\max\{T_j|j \neq i, h \in P_j\} = 0$ and there is a path from $k$ to $h$ in $G(\supscr{P_i}{add}(k))$ that is shorter than the optimal path spanning $c_j$ and $h$ within $G(P_j)$, for any $j\neq i$ with $h \in P_j$. Also note that, by definition, additive subsets are connected.

\subsection{Primary Update Algorithm}
\label{sec:primary}
Define $\map{\mathcal{H}}{ Q^m\times \text{Cov}_m(Q)\times \real_{\geq 0}}{\real_{\geq 0} \cup \{\infty\}}$ by 
\begin{equation*}
\mathcal{H}(c,P,t) = \sum_{k \in Q}\min\left\{\frac{1}{s_i}d_{P_i}(k,c_i)|k \in P_i\right\}\Phi(k,t),
\end{equation*}
where $\Phi$ is the global likelihood. If (i) each agent is solely responsible for events within its own coverage region and (ii) events occur proportionally to $\Phi$, then $\mathcal{H}(c,P,t)$ is understood as the expected time required for an appropriate agent to reach a randomly occurring event from its region generator at time $t$. 
Algorithm~\ref{alg:partitioning} defines the operations performed by the base station when agent $i$ communicates at time $t_0$. 
\begin{algorithm}
\DontPrintSemicolon
\KwData{$t_0$, $P$, $c$, $\Phi$, $\omega$, $\overline{\Delta}$, $\subscr{\Delta}{H}$, $T$, $\mathcal{ID}$, $s$}
\KwResult{$P$, $c$, $P_i^A$, $c_i^A$, $\supscr{P_i}{$A$,pd}$, $T$,   $\tau^A_i$,  $\Phi^A_i$, $\omega$, $\omega_i^A$,  $\mathcal{ID}$ }
\Begin{
\nl Initialize $P^* = \supscr{P}{test} = P$ , $c^* = \supscr{c}{test} =  c$\;
\nl Set $P^*_i = \supscr{P}{test}_i = P^{\mathcal{ID}}_i$\;
\eIf{$T_i > 0$ and $P_i^* = P_i$}{
	\nl Set $\tau_i^A = \tau_i^A - t_0+\omega_i$ and $\omega_i^A = \omega_i = t_0$\;
}{
\For{$k \in P^{\mathcal{ID}}_i$}
{\nl Set $\supscr{P_i}{test} = \supscr{P_i}{add}(k)$ and $\supscr{c_i}{test} = k$\;
\If{$\mathcal{H}(\supscr{c}{test},\supscr{P}{test}, t) < \mathcal{H}(c^*,P^*, t)$}{\nl Set $P^* = \supscr{P}{test}$, $c^* = \supscr{c}{test}$}}
\nl Set $\supscr{P_i}{$A$,pd} =  P_i^* \backslash P^{\mathcal{ID}}_i$ \;
\nl Call Alg.~\ref{alg:timer} and obtain output $\Phi_i^A, \omega, T, \tau_i^A$\;
\nl Set $P_i = P_i^A = P_i^*$, $c_i =  c_i^A  = c_i^*$, $\omega_i^A = \omega_i$\;
 \nl\lFor {$k \in P_i$}{Set $\text{ID}_k = i$}
 }
 \nl \Return $P$, $c$, $P_i^A$, $c_i^A$, $\supscr{P_i}{$A$,pd}$, $T$, $\tau_i^A$, $\Phi_i^A$, $\omega$, $\omega_i^A$, $\mathcal{ID}$}
 
\caption{Timed One-To-Base Station Update}
\label{alg:partitioning}
\end{algorithm}
Here, the input $\subscr{\Delta}{H}>0$ is a constant mission-specific parameter.

\begin{algorithm}
\DontPrintSemicolon
\KwData{$t_0$, $P$, $P^*$, $c^*$, $\supscr{P_i}{$A$,pd}$, $\Phi$, $\omega$, $\overline{\Delta}$, $\subscr{\Delta}{H}$, $T$, $s$}
\KwResult{$\Phi_i^A, \omega, T, \tau_i^A$}
\Begin{
\nl $\supscr{\Delta_i}{Bf} \coloneqq \max\left\{\frac{1}{s_i}d_{P_i}(k, P_i^* \backslash \supscr{P_i}{$A$,pd})|k \in P_i\backslash P_i^*\right\}$\;
\For{Each $j \neq i$ satisfying $P_j \cap P_i^*\neq \emptyset$}{
\nl $\supscr{\Delta_j}{Bf} \coloneqq \max\left\{\frac{1}{s_j}d_{P_j}(k, P_j\backslash P_i^*)|k \in P_j \cap P_i^*\right\}$\;
\nl Set $T_j = \omega_j + \overline{\Delta} - t_0$\;
}
\nl Find $\cramped{\supscr{\Delta_{\text{max}}}{Bf} = \underset{j \neq i, P_j \cap P_i^* \neq \emptyset}{\max}\{\omega_j+\overline{\Delta} + \supscr{\Delta_j}{Bf}- t_0\}}$\;
\nl Redefine $\supscr{\Delta_{\text{max}}}{Bf} = \max\{\supscr{\Delta_{\text{max}}}{Bf}, \supscr{\Delta_i}{Bf}\}$\;
\nl Set $T_i = \supscr{\Delta_{\text{max}}}{Bf}+\subscr{\Delta}{H}$, $\tau_i^A = \supscr{\Delta_{\text{max}}}{Bf}$, $\omega_i = t_0$\;
\nl Construct $\Phi_i^A$, according to~\eqref{eqn:phi_temp} with updated variables\;
\nl \Return $\Phi_i^A, \omega, T, \tau_i^A$
 }
\caption{Timer Update}
\label{alg:timer}
\end{algorithm}
Consider the following initialization assumptions. 
\begin{assn}[Initialization]
The following properties are satisfied when $t = 0$:
\begin{enumerate}
\item $P$ is a connected $m$-partition of $Q$, 
\item $P = \supscr{P}{A} = P^{\mathcal{ID}} $, and
\item for all $i \in \until m$,
\begin{enumerate}
\item $c_i = c_i^A\in P_i^A$, 
\item $\supscr{P_i}{$A$,pd} = \emptyset$, 
\item $T_i = \omega_i  = \omega_i^A = 0$,
\item $\tau_i^A = -\subscr{\Delta}{H}$, and
\item $\Phi_i^A(\cdot, 0)$ is defined according to Eq.~\ref{eqn:phi_temp}.
\end{enumerate}
\end{enumerate}
\label{assn:initialization}
\end{assn}
Notice 1) and 3a) together imply that $c_i \neq c_j$ for any $j \neq i$. Our first result guarantees that the use of Algorithm~\ref{alg:partitioning} for coverage region updates is well-posed.
\begin{theorem}[Well-Posedness]
Under Assumption~\ref{assn:initialization}, an update scheme in which, upon each exchange, the base station and the communicating agent update their respective variables via Algorithm~\ref{alg:partitioning} is well-posed. That is, the operations required by Algorithm~\ref{alg:partitioning} are well-posed at the time of execution.
\label{thm:well-posedness}
\end{theorem}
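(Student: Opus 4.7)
My plan is to prove Theorem~\ref{thm:well-posedness} by induction on the sequence of exchanges, carrying along a set of invariants on the base station state that are jointly sufficient for well-posedness. The invariants are: (i) $P$ is a connected $m$-covering of $Q$; (ii) $P^{\mathcal{ID}}$ is an $m$-partition of $Q$ with $c_i \in P^{\mathcal{ID}}_i \subseteq P_i$ for each $i$; (iii) each $P^{\mathcal{ID}}_i$ is connected; and (iv) whenever $P^{\mathcal{ID}}_i \cap P_j \neq \emptyset$ for some $j \neq i$, both $T_i > 0$ and $P^{\mathcal{ID}}_i = P_i$ hold. The base case at $t = 0$ is immediate: Assumption~\ref{assn:initialization} yields a connected $m$-partition with $P = P^{\mathcal{ID}}$, so (i)--(iii) are trivial and the hypothesis of (iv) is never satisfied.

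For the inductive step, I would first verify that Algorithm~\ref{alg:partitioning}'s operations are well-posed given the invariants just prior to the exchange. The if-branch only reassigns timing variables and is immediate. The else-branch is entered only when $T_i = 0$ or $P^{\mathcal{ID}}_i \neq P_i$, and the contrapositive of (iv) then forces $P^{\mathcal{ID}}_i$ to be disjoint from $\bigcup_{j \neq i} P_j$. Together with (iii), this satisfies the hypotheses of Proposition~\ref{prop:well-posed}, so $\supscr{P_i}{add}(k)$ is well-defined for every $k \in P^{\mathcal{ID}}_i$. Every candidate $\supscr{P}{test}$ encountered in the for-loop remains an $m$-covering because $\supscr{P_i}{test} \supseteq P^{\mathcal{ID}}_i$ and $\bigcup_{j \neq i} P_j \supseteq Q \setminus P^{\mathcal{ID}}_i$ by (i)--(ii), so the $\mathcal{H}$ evaluations are legitimate. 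Algorithm~\ref{alg:timer} then reduces to shortest-path queries on the connected subgraphs $G(P_j)$ together with elementary arithmetic, which are themselves well-posed.

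Next I would verify preservation of the invariants across the update. Invariant (i) is immediate because $P_i^+$ equals either $P^{\mathcal{ID}}_i$ or a connected additive subset, and the covering property follows from the same containment argument above. For (ii), line~10 of Algorithm~\ref{alg:partitioning} assigns identifier $i$ uniformly over $P_i^+$, so $P^{\mathcal{ID}}^+$ remains a partition; the generator $c_j$ (for $j \neq i$) cannot be absorbed because condition~2(b) of Definition~\ref{def:P_plus} would force $d_{P_i^*}(c_j,k)/s_i < d_{P_j}(c_j,c_j)/s_j = 0$, impossible. For (iv), if $P_i^+$ overlaps any $P_j$, Algorithm~\ref{alg:timer} line~6 sets $T_i^+ = \supscr{\Delta_{\max}}{Bf} + \subscr{\Delta}{H} > 0$, and $P^{\mathcal{ID}}_i^+ = P_i^+$ by construction; for another agent $\ell$ with pre-existing overlap $P^{\mathcal{ID}}_\ell \cap P_r \neq \emptyset$, the pre-update (iv) gives $T_\ell > 0$, which via condition~2(a) blocks any vertex of $P_\ell$ from entering $P_i^+$, so agent $\ell$'s status survives unchanged.

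The chief obstacle is preserving (iii) for $j \neq i$, namely that $P^{\mathcal{ID}}_j^+ = P^{\mathcal{ID}}_j \setminus P_i^+$ is connected; a priori, removing absorbed vertices could disconnect $P^{\mathcal{ID}}_j$. I would handle this via a Voronoi-flavored structural lemma on weighted graphs: condition~2(b) forces every absorbed $h \in P_j$ to satisfy $d_{P_j}(h,c_j)/s_j > d_{P_i^*}(h,k)/s_i$, and, using the shortest-path structure rooted at $c_j$ in the connected graph $G(P_j)$ together with an exchange argument along these paths, one can argue that each surviving vertex in $P^{\mathcal{ID}}_j^+$ retains a path to $c_j$ lying entirely outside $P_i^+$. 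Carrying through this discrete-Voronoi-style argument, which is delicate because the competing distance $d_{P_i^*}$ itself depends on the set being constructed, is where I expect the bulk of the technical work to lie.
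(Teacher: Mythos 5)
Your skeleton is the same as the paper's: induct over exchanges, maintain that $P^{\mathcal{ID}}$ is a connected $m$-partition and that whenever the else-branch triggers construction of $\supscr{P_i}{add}$ the set $P_i^{\mathcal{ID}}$ is disjoint from $\bigcup_{j\neq i}P_j$, and then invoke Proposition~\ref{prop:well-posed} to get existence and uniqueness of the additive subset. Your invariants (i), (ii), (iv) and their preservation arguments are essentially the content of the paper's auxiliary Proposition on set membership (Prop.~\ref{prop:set}), and your observation that $c_j$ cannot be absorbed (since $d_{P_j}(c_j,c_j)=0$) matches the paper.

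The genuine gap is exactly the step you defer: preservation of invariant (iii), i.e.\ connectivity of $P_j^{\mathcal{ID},+}=P_j^{\mathcal{ID},-}\setminus \supscr{P_i}{add}(c_i^+)$ for $j\neq i$. This is the crux of the theorem, and ``one can argue via an exchange argument'' is not a proof; without it the induction does not close, because the next exchange may require $P_j^{\mathcal{ID}}$ to be connected in order to build agent $j$'s additive subset. The paper closes it by contradiction using the \emph{maximality} of the additive subset rather than by tracking the incremental construction (which is the delicacy you worry about): if $P_j^{\mathcal{ID},+}$ were disconnected, there would be a surviving vertex $k_1$ (with $\text{ID}_{k_1}=j$) every one of whose length-minimizing paths to $c_j^+$ in $G(P_j^-)$ passes through an absorbed vertex; WLOG $k_1$ is adjacent to such an absorbed $k_2\in\supscr{P_i}{add}(c_i^+)$ on such a path. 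Condition 2(b) applied to $k_2$ gives $\frac{1}{s_i}d_{P_i^+}(k_2,c_i^+)<\frac{1}{s_j}d_{P_j^+}(k_2,c_j^+)$, and appending the edge $\{k_1,k_2\}$ (whose weight is the tail of $k_1$'s shortest path to $c_j$ through $k_2$) yields $\frac{1}{s_i}d_{P_i^+\cup\{k_1\}}(k_1,c_i^+)<\frac{1}{s_j}d_{P_j^+}(k_1,c_j^+)$; since $T_j^-=0$ and, by the set-membership invariant, $j$ is the only competitor holding $k_1$, the vertex $k_1$ satisfies all conditions of Definition~\ref{def:P_plus}, so by maximality $k_1\in\supscr{P_i}{add}(c_i^+)$ --- contradicting that $k_1$ survived. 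You have correctly located where the work lies, but your write-up stops at the point where the paper's proof actually begins; supplying this contradiction argument (or an equivalent one) is required for the proposal to count as a proof.
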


Performing of updates with Algorithm~\ref{alg:partitioning} does not guarantee that individual coverage regions (elements of $P$), are disjoint from one another. 
It does, however, guarantee that the $m$-covering $P$, the local coverage region set $P^A\coloneqq\{P_i^A\}_{i = 1}^m$, and the local likelihoods $\{\Phi^A_i\}_{i = 1}^m$ retain properties that are consistent with the goals of the decomposition-based coverage framework. Namely, the coverings $P$ and $P^A$ maintain connectivity, and each function $\Phi^A_i$ has support that is disjoint from that of all other local likelihoods, yet still evolves to provide reasonable global coverage. The manipulations in Algorithm~\ref{alg:timer} also ensure that agents are able to ``safely'' vacate areas that are re-assigned before newly assigned agents enter. We detail these and other properties for the remainder of this section.

\subsection{Set Properties}
\label{sec:set}

The next result formalizes key set properties.
\begin{theorem}[Set Properties]
Suppose Assumption~\ref{assn:initialization} holds, and that, upon each exchange, the base station and the communicating agent update their respective variables via Algorithm~\ref{alg:partitioning}. Then, the following\footnote{$\text{supp}(f)$ denotes the support of a function $f$.} hold at any time $t \geq 0$:
\begin{enumerate}
\item $P^{\mathcal{ID}}$ is a connected $m$-partition of $Q$,
\item $P$ is a connected $m$-covering of $Q$,
\item $c_i \in P_i$ and $c_i \neq c_j$ for any $i \neq j$,
\item $\text{supp}(\Phi_i^A(\cdot, t)) \subseteq P_i$ for any $i$, and 
\item $\bigcap_{i=1}^m \text{supp}(\Phi_i^A(\cdot,t)) = \emptyset$
\end{enumerate}
\label{thm:set_properties}
\end{theorem}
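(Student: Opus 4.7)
The plan is a straightforward induction over the sequence of base-station/agent exchanges, which constitute the only events that can modify $P$, $c$, or $\mathcal{ID}$. Between exchanges these quantities are static, while the supports of the $\Phi_\ell^A$'s can only grow monotonically (as $t$ crosses $\omega_\ell^A+\tau_\ell^A$); so it suffices to verify (1)--(5) at $t=0$ and preservation across a single update. The base case is immediate from Assumption~\ref{assn:initialization}: $P=P^A=P^{\mathcal{ID}}$ is a connected $m$-partition (giving (1)--(3)), and with $\tau_i^A=-\subscr{\Delta}{H}<0$ Eq.~\eqref{eqn:phi_temp} yields $\text{supp}(\Phi_i^A(\cdot,0))=P_i$, whose disjointness follows from the partition structure ((4), (5)).

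For the inductive step, fix an exchange at time $t_0$ triggered by agent $i$, and assume (1)--(5) hold at $t_0^-$. The ``if''-branch of Algorithm~\ref{alg:partitioning} only modifies $\tau_i^A$, $\omega_i^A$, $\omega_i$, and its compensating shift of $\tau_i^A$ by $\omega_i-t_0$ leaves the support of $\Phi_i^A$ unchanged, so (1)--(5) persist trivially. In the ``else''-branch I would argue property by property. For (1), the new $P_i^{\mathcal{ID}}$ equals $P_i^*=\supscr{P_i}{add}(k^*)$ for some $k^*\in P_i^{\mathcal{ID}}(\text{old})$, which is connected by Definition~\ref{def:P_plus}; for $j\neq i$, the new $P_j^{\mathcal{ID}}=P_j^{\mathcal{ID}}(\text{old})\setminus P_i^*$ contains $c_j$ because applying condition 2(b) with $h=c_j$ would yield the contradiction $0\leq\tfrac{1}{s_i}d_{P_i^*}(c_j,k^*)<\tfrac{1}{s_j}d_{P_j}(c_j,c_j)=0$. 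Connectivity of this set then requires a path-rerouting argument exploiting 2(b) together with the maximality of $P_i^*$ to steer a shortest $v$-to-$c_j$ path around any excised vertex. Properties (2)--(4) follow quickly once (1) is in hand: $P_i=P_i^*$ is connected by construction, the coverage $\bigcup_j P_j$ only grows, $c_i^*=k^*\in P_i^{\mathcal{ID}}(\text{new})\subseteq P_i$ with distinctness of generators a corollary of (1) via $c_\ell\in P_\ell^{\mathcal{ID}}$, and Eq.~\eqref{eqn:phi_temp} confines the new support of $\Phi_i^A$ to $P_i^A=P_i$.

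The crux is (5), whose preservation justifies the elaborate timing machinery in Algorithm~\ref{alg:timer}. After the update, $P_i$ may overlap with competing $P_j$'s precisely on $\supscr{P_i}{$A$,pd}=P_i^*\setminus P_i^{\mathcal{ID}}(\text{old})$, so the supports can clash only there. The buffer $\supscr{\Delta_j}{Bf}$ upper-bounds the time for agent $j$ to vacate $P_j\cap P_i^*$, the newly set $T_j=\omega_j+\overline{\Delta}-t_0$ forces $j$'s next exchange (guaranteed within $\overline{\Delta}$) to enter the ``if''-branch and hence to keep $P_j$ fixed until this evacuation has elapsed, while $\tau_i^A=\supscr{\Delta_\text{max}}{Bf}$ suppresses $\Phi_i^A$ on $\supscr{P_i}{$A$,pd}$ until all competing agents have vacated. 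Combining these bounds with the inductive hypothesis (pairwise-disjoint supports at $t_0^-$) and the monotone-growth observation above establishes (5).

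I expect the two main obstacles to be (i) the connectivity argument for $P_j^{\mathcal{ID}}(\text{new})$, which demands a careful detour construction around $P_i^*$ using the strict inequality in 2(b) and the maximality of the additive subset, and (ii) the timing bookkeeping in (5), in which one must confirm that every competing $\Phi_j^A$ actually deactivates on its overlap with $\supscr{P_i}{$A$,pd}$ before $\Phi_i^A$ reactivates there; this requires carefully tracking the ``if''-branch condition $T_j>0 \wedge P_j^*=P_j$ together with the bound $\overline{\Delta}$. Once these are established, the remaining items are routine invariant checks.
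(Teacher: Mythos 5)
Your overall architecture---induction over the exchange sequence, the base case from Assumption~\ref{assn:initialization}, monotone growth of supports between exchanges, and the observation that $c_j\notin\supscr{P_i}{add}(c_i^+)$ because $d_{P_j}(c_j,c_j)=0$---matches the paper's, and your handling of statements 2--4 is sound. But the two steps you yourself flag as obstacles are where the proof actually lives, and in both cases the mechanism you sketch is not the one that works. For connectivity of $P_j^{\mathcal{ID},+}$, $j\neq i$: you cannot in general ``steer a shortest $v$-to-$c_j$ path around'' the excised vertices, since a vertex of $\supscr{P_i}{add}(c_i^+)$ may be a cut vertex of $G(P_j^-)$ and no detour need exist. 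The paper's argument is an \emph{absorption} argument, not a rerouting one: if some $k_1$ with $\text{ID}^-_{k_1}=j$ were separated from $c_j^+$, every shortest path from $k_1$ to $c_j^+$ in $G(P_j^-)$ would contain some $k_2\in \supscr{P_i}{add}(c_i^+)$; condition 2(b) at $k_2$ then yields $\frac{1}{s_i}d_{P_i^+\cup\{k_1\}}(k_1,c_i^+)<\frac{1}{s_j}d_{P_j^+}(k_1,c_j^+)$, and Proposition~\ref{prop:set} (which gives $T_j^-=0$ and that $d_{P_j}(k_1,c_j)$ realizes the minimum over all regions containing $k_1$) shows $k_1$ itself meets the membership criteria of Definition~\ref{def:P_plus}, contradicting maximality of $\supscr{P_i}{add}(c_i^+)$. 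Nothing is ever cut off, so no detour is needed.

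Your mechanism for statement 5 is backwards. When the agent $j$ that lost vertices to $P_i^*$ next communicates, we have $P_j^{\mathcal{ID}}\neq P_j$, so the guard ``$T_j>0$ and $P_j^*=P_j$'' is \emph{false} and the else-branch executes; since $T_i>0$ at that time, condition 2(a) of Definition~\ref{def:P_plus} prevents $j$ from reclaiming the contested vertices, so they are removed from $P_j$ and hence from $\text{supp}(\Phi_j^A)$ at that exchange, which occurs before $\omega_j^++\overline{\Delta}\leq t_0+\tau_i^{A,+}$, i.e., before $\Phi_i^A$ activates on $\supscr{P_i}{$A$,pd}$. Under your reading---that $j$'s next exchange enters the if-branch and ``keeps $P_j$ fixed''---the contested vertices would remain in $\text{supp}(\Phi_j^A)$ indefinitely (they lie in $P_j$ and not in $\supscr{P_j}{$A$,pd}$), and an overlap would appear the instant $\Phi_i^A$ reactivates there; the argument would not close. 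The paper also organizes this step as a per-vertex induction over reassignments of $\text{ID}_k$ rather than over all exchanges, which keeps the timer bookkeeping local to the two agents involved.
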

Whenever additions are made to an agent's coverage region during a call to Algorithm~\ref{alg:partitioning}, the
newly added vertices are not immediately included in the instantaneous support\footnote{The \emph{instantaneous support} of $\Phi_i^A$ at time $t$ is defined as $\text{supp}(\Phi_i^A(\cdot, t))$.} of the agent's local likelihood. As such, if each agent's movement is restricted to lie within the aforementioned instantaneous support, then exploration of newly added regions is temporarily prohibited following an update. This delay allows other agents to vacate before the newly assigned agent enters. 
Conversely, when regions are removed from an agent's coverage region, Algorithm~\ref{alg:partitioning} guarantees a ``safe'' path, i.e., a path with no collision risk, exists and persists long enough for the agent to vacate.
Let $\overline{d}\coloneqq \max_i\frac{1}{s_i}\sum_{\{k_1,k_2\} \in \mathcal{E}} d_{Q}(k_1,k_2)$, and define the \emph{prohibited region} of agent $i$ at time $t$, $\text{Proh}_i(t)$, as the subset consisting of any newly added vertices that do not yet belong to  $\text{supp}(\Phi_i^A(\cdot, t))$, i.e., $\text{Proh}_i(t)\coloneqq \{k \in \supscr{P_i}{$A$}|, t-\omega^A_i<\tau_i^A \text{ and } k \in \supscr{P_i}{$A$,pd}\}$, we formalize this discussion as follows.
\begin{theorem}[Coverage Quality]
Suppose Assumption~\ref{assn:initialization} holds, and that, upon each exchange, the base station and the communicating agent updates their respective variables via Algorithm~\ref{alg:partitioning}. Then, for any $k \in Q$ and any $t \geq 0$:
 \begin{enumerate}
 \item $k$ belongs to at least one agent's coverage region $P_i$, 
 \item if $k \in \text{Proh}_i(t)$ for some $i$, then there exists $t_0$ satisfying $t< t_0 <t+\overline{\Delta}+\overline{d}$ such that, for all $\bar{t} \in [t_0,t_0+\subscr{\Delta}{H}]$, the vertex $k$ belongs to the set $P_i\backslash \text{Proh}_i(\bar{t})$, and 
\item if $k$ is removed from $P_i$ at time $t$, then, for all \[\bar{t} \in\left.\left(t, t+\frac{1}{s_i}d_{P_i^-}\left(k, P^{\mathcal{ID},-}_i\right)\right.\right],\] we have
\begin{enumerate}
\item $P^{\mathcal{ID},-}_i\subseteq P_i$, and 
\item there exists a length-minimizing path on $G(P_i^-)$ from  $k$ into $P^{\mathcal{ID},-}_i$, and all of the vertices along any such path (except the terminal vertex) belong to the set  $\text{Proh}_{\text{ID}_k^+}(\bar{t})\backslash \bigcup_{j \neq \text{ID}_k^+} P_j$. 
\end{enumerate}
\end{enumerate}
\label{thm:characteristics}
\end{theorem}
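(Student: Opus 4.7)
Part~1 is immediate from Theorem~\ref{thm:set_properties}(2): $P$ covers $Q$, so every $k\in Q$ lies in some $P_i$. Parts~2 and~3 both rest on two invariants of Algorithms~\ref{alg:partitioning} and~\ref{alg:timer}. (I) A short update (line~3 of Algorithm~\ref{alg:partitioning}) leaves $\omega_i^A+\tau_i^A$ unchanged, while a full update sets $T_i=\tau_i^A+\subscr{\Delta}{H}$ with $\tau_i^A=\supscr{\Delta_{\text{max}}}{Bf}$. (II) While $T_i>0$, Definition~\ref{def:P_plus}(2a) forbids any other agent's additive subset from stripping vertices out of $P_i$, so $P_i$ and $P_i^{\mathcal{ID}}$ stay frozen; and since $P_i=P_i^{\mathcal{ID}}$ immediately after a full update, every subsequent communication of $i$ inside this window must take the short branch.

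\textbf{Part~2.} Take $t_0\coloneqq\omega_i^A(t)+\tau_i^A(t)$. The lower bound $t<t_0$ is exactly the hypothesis $k\in\text{Proh}_i(t)$. For the upper bound, $\supscr{\Delta_j}{Bf}$ and $\supscr{\Delta_i}{Bf}$ are each at most $\overline d$, and by $\underline\Delta>0$ together with the one-at-a-time rule, $\omega_j-\omega_i^A<0$ strictly for every $j\neq i$; thus $\tau_i^A<\overline\Delta+\overline d$, and $\omega_i^A\le t$ yields $t_0<t+\overline\Delta+\overline d$. For persistence, invariant (I) keeps $t_0$ fixed through any short update of $i$, and invariant (II) keeps $k\in P_i$ for the whole interval $[\omega_i^A,t_0+\subscr{\Delta}{H})$; for $\bar t\ge t_0$ the inequality $\bar t-\omega_i^A\ge\tau_i^A$ violates the first clause of the definition of $\text{Proh}_i$, so $k\in P_i\setminus\text{Proh}_i(\bar t)$.

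\textbf{Part~3.} Let $j=\text{ID}_k^+$ and let $s<t$ be the full update of $j$ that acquired $k$; at that moment $k\in P_j^*(s)\setminus P_j^{\mathcal{ID}}(s^-)$ and $k$ is placed into $\supscr{P_j}{$A$,pd}$. Algorithm~\ref{alg:timer} at $s$ sets $T_i=\omega_i+\overline\Delta-s$, and invariant~(II) applied to $i$ then freezes $P_i$ on $(s,t]$, so $P_i^-(t)=P_i(s)$ and $P_i^{\mathcal{ID},-}(t)=P_i^{\mathcal{ID}}(s^-)\setminus P_j^*(s)$. For~(a), line~7 of Algorithm~\ref{alg:partitioning} gives $P_i^*\setminus\supscr{P_i}{$A$,pd}=P_i^{\mathcal{ID},-}(t)$ at the start of $i$'s update, so $k\in P_i^-\setminus P_i^*$ contributes $\tfrac{1}{s_i}d_{P_i^-}(k,P_i^{\mathcal{ID},-})$ to the max defining $\supscr{\Delta_i}{Bf}(t)\le\tau_i^A(t)$; hence $T_i$ remains positive throughout the stated interval and invariant~(II) locks $P_i^{\mathcal{ID},-}\subseteq P_i(\bar t)$. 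For~(b), $G(P_i^-)$ is connected and contains both endpoints, so a length-minimizing path $\pi$ exists; I argue that every non-terminal vertex of $\pi$ lies in $P_j^*(s)\cap P_i^-$, hence in $\supscr{P_j}{$A$,pd}$ and in $P_j$, while the symmetric computation of $\tau_j^A(s)$ from Algorithm~\ref{alg:timer} at $s$, combined with the protocol bound $t\le\omega_i(s^-)+\overline\Delta$, furnishes $\bar t-\omega_j^A<\tau_j^A$. The disjointness $h\notin\bigcup_{j'\neq j}P_{j'}(\bar t)$ then follows because $h\notin P_i^*$ (else $\pi$ would not be length-minimizing) and invariant~(II) applied to every $j''\neq j$ prevents any other agent from acquiring $h$ during the interval.

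\textbf{Main obstacle.} The delicate step is the inner claim of Part~3(b). First, showing that every non-terminal vertex of \emph{any} shortest path in $G(P_i^-)$ from $k$ to $P_i^{\mathcal{ID},-}$ lies inside $P_j^*(s)$ requires Definition~\ref{def:P_plus}(2b) applied at time $s$ with the roles of $i$ and $j$ swapped, used to rule out shortcuts through vertices of $P_i^-$ that some third agent had previously stolen but whose membership in $P_i$'s storage has not yet been reclaimed. Second, showing that such a vertex survives in $\supscr{P_j}{$A$,pd}$ all the way to $\bar t$ requires excluding a full update of $j$ on $(s,\bar t]$ (which would absorb $h$ into $P_j^{\mathcal{ID}}$ and expel it from $\supscr{P_j}{$A$,pd}$); this in turn uses invariant~(II) applied to $j$, combined with the explicit form of $\tau_j^A(s)$ produced by Algorithm~\ref{alg:timer}.
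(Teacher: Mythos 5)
Your decomposition and key quantities match the paper's own proof: part~1 from Theorem~\ref{thm:set_properties}, part~2 built around the invariant quantity $\omega_i^A+\tau_i^A$ (your direct bound $\tau_i^A<\overline{\Delta}+\overline{d}$ read off line~4 of Algorithm~\ref{alg:timer} replaces the paper's inductive bound $\tau_i^A-t+\omega_i^A\le T_i-\subscr{\Delta}{H}$, and is fine), and part~3(a) via the observation that $k\in P_i^-\setminus P_i^*$ contributes at least $\tfrac{1}{s_i}d_{P_i^-}(k,P_i^{\mathcal{ID},-})$ to $\supscr{\Delta_i}{Bf}$ and hence to $T_i^+$. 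Parts~1, 2, and 3(a) are sound.

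The gaps are in part~3(b). First, your justification ``$h\notin P_i^*$, else $\pi$ would not be length-minimizing'' does not hold: length-minimality only excludes non-terminal vertices from the \emph{target} set $P_i^{\mathcal{ID},-}$, whereas $P_i^*=P_i^+$ may strictly contain $P_i^{\mathcal{ID},-}$, since agent $i$ may acquire new territory at time $t$. What actually excludes $h$ from $P_i^+$ is that $h\in P_j^{\mathcal{ID}}$ with $T_j>0$ at time $t$, so condition~2a of Definition~\ref{def:P_plus} bars $h$ from any additive subset $\supscr{P_i}{add}(\cdot)$. Second, ``invariant~(II) applied to every $j''\neq j$'' is the wrong mechanism for $h\notin\bigcup_{j'\neq j}P_{j'}(\bar t)$: that invariant prevents others from \emph{stripping} $P_{j''}$, not from \emph{acquiring} $h$ (acquisition is again blocked by $T_j>0$ and Definition~\ref{def:P_plus}(2a)); and you omit entirely the need to rule out that $h$ \emph{already} belongs to some third agent's region before $t$. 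The paper gets this from Proposition~\ref{prop:set} (each vertex lies in at most two elements of $P$, and a doubly-owned vertex's regions meet no third party's $\mathcal{ID}$-region): $h$ was owned by exactly $\{i,j\}$, and $i$ relinquishes it at $t$. Relatedly, your ``main obstacle'' paragraph misdiagnoses the delicate step: there are no ``vertices of $P_i^-$ previously stolen by a third agent'' to be ruled out via Definition~\ref{def:P_plus}(2b) with roles swapped, because any such earlier theft would have left $T_i>0$ at time $s$, and Definition~\ref{def:P_plus}(2a) would then have prevented $j$ from acquiring $k$ at $s$ at all. Hence $P_i(s^-)=P_i^{\mathcal{ID}}(s^-)$ and every non-terminal path vertex lies in $P_j^*(s)$ automatically; the genuinely needed tools are the timer condition and Proposition~\ref{prop:set}, not condition~2b.
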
 

Theorems~\ref{thm:set_properties} and~\ref{thm:characteristics} allow Algorithm~\ref{alg:partitioning} to operate within a decomposition-based framework to provide reasonable coverage with inherent collision avoidance.
Indeed, when each agent moves within its coverage region and avoids its prohibited region, the theorems imply that each agent 1) can visit its entire coverage region (connectedness), 2) allows adequate time for other agents to vacate newly assigned regions before entering, and 3) has a ``safe'' route into the remaining coverage region if its current location is removed during an update.  
\begin{remark}[Local Variables]
Theorems~\ref{thm:set_properties} and~\ref{thm:characteristics} also hold if we replace $P$ with $P^A$ and $c$ with $c^A$ in the theorem statement. 
\label{rem:local}
\end{remark}
\begin{remark}[Bounds]
Theorem~\ref{thm:characteristics} also holds when $\overline{d}$ is redefined as any other upper bound on the subgraph distance between two arbitrarily chosen vertices of an arbitrarily chosen connected subgraph of $G$.
\label{rem:bounds}
\end{remark}
\subsection{Convergence Properties}
\label{sec:convergence}
Our proposed strategy differs from that of~\cite{RP-PF-JWD-RC-FB:13w} mainly due to the presence of logic (e.g., timer manipulations) to ensure effective pairing with single-agent trajectory planners. Note also that $\mathcal{H}$ differs from typical partitioning costs, since it uses subgraph, rather than global graph, distances. As such, convergence properties of the algorithms herein do not follow readily from existing results. Consider the following definition. 
 \begin{definition}[Pareto Optimality]
The pair $(c,P)$ is Pareto optimal at time $t$ if (i) $\mathcal{H}(c, P, t) \leq \mathcal{H}(\bar{c}, P, t)$ for any $\bar{c} \in Q^m$, and (ii) $\mathcal{H}(c, P, t) \leq \mathcal{H}(c, \bar{P}, t)$ for any $\bar{P} \in \text{Cov}_m(Q)$.
 \label{def:pareto}
 \end{definition}
 The following result characterizes the dynamic evolution of coverage regions with respect to Pareto optimality.

\begin{theorem}[Convergence]
Suppose Assumption~\ref{assn:initialization} holds and that, upon each exchange, the base station and the communicating agent updates their respective variables via Algorithm~\ref{alg:partitioning}.
If $\Phi$ is static, i.e., $\Phi(\cdot,t_1) = \Phi(\cdot, t_2)$ for all $t_1,t_2$, then the $m$-covering $P$  and the generators $c$ converge in finite time to an $m$-partition $P^*$ of $Q$ and a set $c^*$, resp. The pair $(c^*, P^*)$ is Pareto optimal at any time following convergence.
\label{thm:convergence}
\end{theorem}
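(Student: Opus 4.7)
The plan is to establish convergence via a monotone-potential argument and then verify Pareto optimality at the terminal state. First, since $\Phi$ is time-invariant, I would track the scalar $V := \mathcal{H}(c, P^{\mathcal{ID}})$, which is well-defined because Theorem~\ref{thm:set_properties} ensures $P^{\mathcal{ID}}$ is always a connected $m$-partition. I would show $V$ is non-increasing at every exchange and strictly decreasing whenever either $c$ or $P^{\mathcal{ID}}$ changes. The enabler is condition~2(b) of Definition~\ref{def:P_plus}: when agent $i$'s update sets $P_i^{*} = \supscr{P_i}{add}(k)$, every vertex $h$ absorbed from a neighbor's ID region satisfies the strict inequality $\frac{1}{s_i}d_{\supscr{P_i}{add}(k)}(h, k) < \frac{1}{s_j}d_{P_j}(h, c_j)$, forcing a strict per-vertex drop in $V$. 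One must additionally control the possible loss for vertices that remain in the shrunken $P_j^{\mathcal{ID}}$; this uses connectivity of the additive subset together with the algorithm's explicit $\mathcal{H}$-comparison in line~5 of Algorithm~\ref{alg:partitioning}, which only commits to a test whose $\mathcal{H}$ strictly improves on the baseline.

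Once monotonicity is in hand, convergence in finite time follows because $(c, P^{\mathcal{ID}})$ lies in the finite set $Q^m \times \{\text{partitions of }Q\}$, so $V$ takes only finitely many values and can strictly decrease only finitely often. The upper bound $\overline{\Delta}$ on inter-communication times guarantees every agent is eventually invoked after the last state-changing update, at which point the no-improvement branch triggers and line~9 collapses $P_i$ to $P_i^{\mathcal{ID}}$. After every agent has run once in this stable regime, $P$ coincides with the $m$-partition $P^{\mathcal{ID}}$, which I would designate as the terminal $P^{*}$ together with the corresponding generator set $c^{*}$. For Pareto optimality, condition~(i) follows because $\mathcal{H}(\bar{c}, P^{*}) = \infty$ unless $\bar{c}_i \in P_i^{*}$ for every $i$, and the algorithm's $k$-loop over $P_i^{\mathcal{ID}} = P_i^{*}$ has already compared every relevant candidate against the incumbent generator. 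Condition~(ii) uses the converse reading of Definition~\ref{def:P_plus}: any $\bar{P} \in \text{Cov}_m(Q)$ strictly improving on $P^{*}$ would necessarily correspond to an additive-subset-style reassignment around some $c_i^{*}$, and would therefore have been accepted by the algorithm before termination.

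The technical heart of the argument is the monotonicity claim in the first paragraph: shrinking $P_j^{\mathcal{ID}}$ can in principle lengthen shortest paths within it for retained vertices, so one must show these potential losses are always dominated by the strict gains from absorbed vertices. Bridging the partition-based potential $V = \mathcal{H}(c, P^{\mathcal{ID}})$ with the algorithm's internally used $\mathcal{H}(c, P^{*})$ (whose $i$-th slot is $P_i^{\mathcal{ID}}$ but whose remaining slots are the broader $P_j$) is the subtle bookkeeping hurdle. Pareto condition~(ii) presents the other delicate point, since it is a global statement over $\text{Cov}_m(Q)$ rather than a local statement over additive-subset candidates, and one must argue that no non-additive reassignment can outperform the converged configuration.
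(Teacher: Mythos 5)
Your overall architecture matches the paper's: $\mathcal{H}$ as a monotone potential, finiteness of $Q^m\times\text{Cov}_m(Q)$ to get finite-time convergence, the strict inequality in line 5 of Algorithm~\ref{alg:partitioning} to force stabilization of $c$ and $P^{\mathcal{ID}}$, and local no-improvement conditions at the boundary for Pareto optimality. The problem is the step you yourself flag as the ``technical heart'': you propose to handle the retained vertices of the shrunken $P_j^{\mathcal{ID}}$ by arguing their ``potential losses are always dominated by the strict gains from absorbed vertices.'' That mechanism does not work, and line 5 cannot supply it. The comparison in line 5 is made on the covering $\supscr{P}{test}$, whose $j$-th slot is still the \emph{full} $P_j$; the retained vertices are therefore priced with their old (short) distances on both sides of the comparison, so the accepted update certifies nothing about $\mathcal{H}(c^+, P^{\mathcal{ID},+})$. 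Moreover, a domination argument is hopeless in general: the gains on the absorbed set are weighted by $\Phi$ there and can be arbitrarily small, while a hypothetical lengthening of shortest paths for retained vertices could carry arbitrarily large weight.

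The paper closes this gap with a separate lemma (Proposition~\ref{prop:H_min}) proving that the losses are exactly zero: $\mathcal{H}(c, P^{\mathcal{ID}}, \cdot) = \mathcal{H}(c, P, \cdot)$ at all times. The key step is a contradiction: if every length-minimizing path in $G(P_j^{+})$ from $c_j^{+}$ to a retained vertex $k$ had to exit $G(P_j^{\mathcal{ID},+})$ through the absorbed region, then condition 2(b) of Definition~\ref{def:P_plus} applied to the exit vertex forces $\frac{1}{s_i}d_{P_i^+\cup\{k\}}(k,c_i^+) < \frac{1}{s_j}d_{P_j^+}(k,c_j^+)$, so $k$ itself would have been swallowed by $\supscr{P_i}{add}(c_i^+)$, contradicting that $k$ was retained. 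This argument in turn leans on the set-membership bookkeeping of Proposition~\ref{prop:set} and on the timer logic ensuring agent $j$ relinquishes overlap before further claims are made. None of this machinery appears in your proposal, so the monotonicity of your potential $V$ --- and hence both the convergence and the Pareto-optimality conclusions --- is unestablished. Your Pareto condition (ii) has a similar, smaller soft spot: you assert that any improving $\bar P$ ``would necessarily correspond to an additive-subset-style reassignment,'' whereas the paper derives it explicitly from the boundary inequality $\frac{1}{s_i}d_{P_i\cup\{k\}}(c_i,k) \ge \frac{1}{s_j}d_{P_j}(c_j,k)$ extended to arbitrary transferred subsets $Q'$.
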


Whenever the event likelihood is static (and Assumption~\ref{assn:initialization} holds), Algorithm~\ref{alg:partitioning} causes coverage regions and generators to collectively converge in finite time to a Pareto optimal configuration. That is, the agent's limiting coverage assignments are ``optimal'' in that they balance the coverage load in a manner that directly considers the likelihood $\Phi$. Further, the entire operation only relies on sporadic and unplanned information exchange between agents and the base station. 

\begin{remark}[Voronoi Partitions]
It can be shown that Pareto optimality of $(c^*, P^*)$ in Theorem~\ref{thm:convergence} implies that, following convergence, $P^*$ is a multiplicatively weighted Voronoi partition (generated by $c^*$, weighted by $s$, subject to density $\Phi(\cdot, t)$) by standard definitions (e.g.,~\cite{RP-PF-JWD-RC-FB:13w}). If the centroid set of each $P_i$ is defined as $\arg\min_{h \in P_i} \sum_{k\in P_i} d_{P_i}(k,h)\Phi(k,t)$, then $P^*$ is also centroidal. 
\label{rem:centroidal}
\end{remark}

\section{Decomposition-Based Surveillance.}
\label{sec:routing}
This section pairs the proposed partitioning framework with a generic, single-vehicle trajectory planner, forming the complete, decomposition-based coverage control framework.

\subsection{Complete Routing Algorithm} 
\label{sec:complete}

Theorems~\ref{thm:set_properties} and~\ref{thm:characteristics} provide a number of useful insights when using the dynamic partitioning updates of the previous section. Namely, Theorem~\ref{thm:set_properties} ensures that (i) the instantaneous support of each $\Phi_i^A$ lies entirely within the coverage region $P_i^A$, and (ii) the support of two distinct agents' local likelihoods do not intersect. Theorem~\ref{thm:characteristics} states that (i) length of any interval on which a given vertex is uncovered, i.e., belongs solely to agent prohibited regions, cannot exceed a finite upper bound, and (ii) the parameter $\subscr{\Delta}{H}$ is a lower bound on the length of time that a recently uncovered vertex must remain covered before it can become uncovered again. Since $G(Q)$ is a discrete representation of the surveillance region, these facts together suggest that an intelligent routing scheme that carefully restricts agent motion according to the instantaneous support of the local likelihood functions could achieve adequate coverage while also accomplishing the ancillary goal of collision avoidance. This motivates the following assumption.
\begin{assn}[Agent Motion]
Each agent $i$ has knowledge of its position at any time $t$, and its on-board trajectory planner operates under the following guidelines:
\begin{enumerate}
\item generated trajectories obey agent motion constraints,
\item trajectories are constructed incrementally and can be altered in real-time, and
\item the agent is never directed to leave regions associated with $P_i$ or enter regions associated with $\text{Proh}_i(t)$.
\end{enumerate}
Each agent precisely traverses generated trajectories.
\label{assn:motion}
\end{assn}

Algorithm~\ref{alg:complete} presents the local protocol for Agent $i$.
\begin{algorithm}
\DontPrintSemicolon
\KwData{$G(Q)$, $\Phi_i^A$, $P_i^A$, $c_i^A$, $\supscr{P_i}{$A$,pd}$, $\tau_i^A$, $\omega_i^A$}
\Begin{
\While{True}{
\nl Increment trajectory via on-board planner\;
\nl Follow trajectory\;
\If{Communication with the base station}{
\nl Set $\supscr{P_i}{test} = P_i^A$\;
\nl Obtain updated variables from base station\;
\If{Location lies within the set $P_i^A\backslash\supscr{P_i}{test}$}{
\nl Find a minimum-length path in $G(\supscr{P_i}{test})$ from the currently occupied node into $P_i^A$\;
\While{Agent $i$ is outside $P_i^A$}{
\nl Follow the aforementioned route
}
}
}
}
}
\caption{Motion Protocol for Agent $i$}
\label{alg:complete}
\end{algorithm}

\subsection{Collision Avoidance}
\label{sec:collide}

Although Assumption~\ref{assn:motion} locally prevents agents from leaving assigned coverage regions or entering prohibited regions, dynamic coverage updates can still result in undesirable agent configurations.
Indeed, an agent can be placed outside of its own coverage region if the vertex corresponding to its location is abruptly removed during an update. If this happens, Algorithm~\ref{alg:complete} constructs a route from the agent's location back into a region where there is no collision risk. With mild assumptions, Theorem~\ref{thm:characteristics} guarantees that this construction 1) is well-defined, and 2) does not present the agent with a transient collision risk. We formalize this result here.
\begin{theorem}[Collision Avoidance]
Suppose Assumptions~\ref{assn:initialization} and~\ref{assn:motion} hold, and that each agent's initial position lies within its initial coverage region $P_i$. Suppose further that each agent's motion is locally governed according to Algorithm~\ref{alg:complete}, where the update in line $4$ is calculated by the base station via Algorithm~\ref{alg:partitioning}. If the weight assigned to each edge in $G(Q)$ is an upper bound on the distance between the associated regions, then no two agents will ever collide.
\label{thm:collision}
\end{theorem}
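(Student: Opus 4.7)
I would prove a stronger inductive invariant: at every time $t$ and for every agent $i$, either (a) agent $i$'s physical position lies in a vertex of $\text{supp}(\Phi_i^A(\cdot,t)) \subseteq P_i \setminus \text{Proh}_i(t)$, or (b) agent $i$ is in the middle of executing the return traversal constructed in line~6 of Algorithm~\ref{alg:complete}. Theorem~\ref{thm:set_properties} ensures that the local-likelihood supports are pairwise disjoint, so two agents simultaneously in state~(a) cannot occupy the same vertex; the geometric structure of the return paths supplied by Theorem~\ref{thm:characteristics} item~3 will handle every configuration involving state~(b).

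First I would dispatch the base case at $t=0$: Assumption~\ref{assn:initialization} gives $\supscr{P_i}{$A$,pd} = \emptyset$, hence $\text{Proh}_i(0) = \emptyset$, and by hypothesis each initial position lies in $P_i$, so state~(a) holds. Between exchanges, Assumption~\ref{assn:motion} item~3 forbids agent $i$ from leaving $P_i$ or entering $\text{Proh}_i(t)$, so state~(a) is preserved and any ongoing state~(b) traversal is unimpeded. At an exchange at time $t$ for agent $i$, if the updated $P_i$ still contains the current vertex, state~(a) persists; otherwise the current vertex $k$ satisfies $k \in P_i^- \setminus P_i$, which triggers state~(b). Applying Theorem~\ref{thm:characteristics} item~3 produces a length-minimizing path in $G(P_i^-)$ from $k$ into $P_i^{\mathcal{ID},-} \subseteq P_i$ whose non-terminal vertices lie in $\text{Proh}_{\text{ID}_k^+}(\bar t) \setminus \bigcup_{j \neq \text{ID}_k^+} P_j$ for every $\bar t \in (t,\, t+\tfrac{1}{s_i} d_{P_i^-}(k, P_i^{\mathcal{ID},-})]$. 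The hypothesis that each edge weight upper-bounds the distance between the associated subregions bounds the Euclidean length of the physical trajectory by the graph length, so at maximum speed $s_i$ the agent completes the traversal within the guaranteed window. During the traversal any interior vertex $k'$ is (i) avoided by agent $\text{ID}_k^+$ because $k' \in \text{Proh}_{\text{ID}_k^+}(\bar t)$ and Assumption~\ref{assn:motion} item~3 forbids that agent from entering its prohibited region, and (ii) outside $P_j$ for every $j \neq \text{ID}_k^+$, so every such agent avoids $k'$ whether it is in state~(a) or in state~(b).

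The principal obstacle will be the bookkeeping for concurrent and cascading events: several agents may be in state~(b) simultaneously, and new exchanges (separated by at least $\underline\Delta$) may intervene before a return traversal completes. The resolution is that Theorem~\ref{thm:characteristics} item~3 asserts its geometric property for the \emph{entire} time window, and this property is preserved under any later well-posed execution of Algorithm~\ref{alg:partitioning}: the timer updates in Algorithm~\ref{alg:timer} set $T_j = \omega_j + \overline\Delta - t_0$ precisely to prevent Definition~\ref{def:P_plus} condition~2a from licensing a subsequent additive-subset operation that would re-absorb the interior vertices of an active return path within its window. Once this stability across overlapping windows is verified, the two invariant states compose seamlessly, the induction closes, and the theorem follows.
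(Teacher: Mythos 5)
Your overall route is the same as the paper's: keep every agent inside $P_i\setminus\text{Proh}_i(t)$ via Assumption~\ref{assn:motion} and the disjointness of those sets in the ``normal'' case, and invoke Theorem~\ref{thm:characteristics}, statement~3 to certify the line-6 return traversal in the exceptional case. You also make explicit something the paper leaves implicit, namely that the edge-weight hypothesis is what guarantees the physical traversal finishes inside the time window of Theorem~\ref{thm:characteristics}. Two remarks, one minor and one substantive. Minor: your state~(a) is phrased in terms of $\text{supp}(\Phi_i^A(\cdot,t))$, but Assumption~\ref{assn:motion} only confines the agent to $P_i\setminus\text{Proh}_i(t)$, which strictly contains the support wherever $\Phi$ vanishes; the disjointness you actually need is of the sets $P_i\setminus\text{Proh}_i(t)$, which follows from the machinery behind Theorem~\ref{thm:set_properties} but is not literally its statement~5.

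The substantive gap is in your step~(i) for the return traversal: you argue that agent $\text{ID}_k^+$ avoids the interior vertices of agent $i$'s return path ``because $k'\in\text{Proh}_{\text{ID}_k^+}(\bar t)$ and Assumption~\ref{assn:motion} item~3 forbids that agent from entering its prohibited region.'' But Assumption~\ref{assn:motion} constrains only trajectories generated by the on-board planner (lines $1$--$2$ of Algorithm~\ref{alg:complete}). If agent $\text{ID}_k^+$ is itself in your state~(b), it is following a line-6 path that deliberately leaves $P_{\text{ID}_k^+}^+$, and nothing you have said prevents \emph{that} path from passing through $\text{Proh}_{\text{ID}_k^+}$ --- which is exactly the region where other agents may not yet have vacated. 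This is why the paper's proof has a second, separate paragraph establishing that no agent ever enters its \emph{own} prohibited region even while executing lines $5$--$6$: at the instant a removal-triggering update occurs one has $\text{Proh}_i^-=\emptyset$ (the reassignment required $T_i=0$, hence $t_0-\omega_i^{A,-}>\tau_i^{A,-}$), the freshly created $\text{Proh}_i(t_0^+)\subseteq\supscr{P_i}{$A$,pd,+}$ is shown via Proposition~\ref{prop:set} to be disjoint from every vertex of the constructed path, and $T_i^+>\tau_i^{A,+}>\frac{1}{s_i}d_{P_i^{A,-}}(k,P_i^{\mathcal{ID},-})$ keeps $\text{Proh}_i(\cdot)$ frozen for the duration of the traversal. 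Your closing paragraph correctly identifies concurrent state-(b) executions as the danger, but the resolution you sketch (timers blocking re-absorption of path vertices) addresses only the stability of the path's set membership, not the missing ``no agent enters its own prohibited region while returning'' invariant. Without that piece the induction does not close.
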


By running the schemes of Section~\ref{sec:partitioning} in conjunction with a motion planning scheme obeying Assumption~\ref{assn:motion}, we obtain a complete strategy that 1) only requires weak communication assumptions, 2) provides dynamic load-balancing, and 3) has inherent collision avoidance/efficiency properties.

\section{Numerical Examples}
\label{sec:simulations}

This section presents numerical examples to illustrate the functionality of the decomposition-based routing scheme. In all examples, high-level coverage assignment updates are performed by the base station via Algorithm~\ref{alg:partitioning} during each exchange with an agent, while each agent's local processor runs the motion protocol in Algorithm~\ref{alg:complete}. For incremental trajectory construction (Algorithm~\ref{alg:complete}, line $1$), we implement a modified version of the \emph{Spectral Multiscale Coverage} (SMC) scheme in~\cite{GM-IM:09}, which creates agent trajectories that mimic ergodic dynamics while also locally constraining agent motion to lie within the appropriate sets. This planner satisfies Assumption~\ref{assn:motion}.
Initial region generators were selected randomly (enforcing the non-coincidence constraint), and each agent was initially placed at its region generator. The initial covering $P$ was created from these generators by calculating a weighted Voronoi partition. The remaining parameters were chosen according to Assumption~\ref{assn:initialization}. During the simulation, randomly chosen agents (chosen via random number generator) sporadically communicated with the base station to receive coverage assignment updates. Communication times were randomly chosen, subject to a maximum inter-communication time $\overline{\Delta}$. 

\subsection{Static Likelihood}

The first example is a $4$ agent mission, which is executed over a 100 x 100 surveillance region subject to a static, Gaussian likelihood centered at the bottom left corner. For discretization, the region is divided into $400$, $5$ x $5$ subregions, and regions are considered adjacent if they share a horizontal or vertical edge. Each agent has a maximum speed of $1$ unit distance per unit time, and the maximum inter-communication time is $\overline{\Delta} = 10$ time units. Figure \ref{fig:Base Simulation} shows the evolution of the coverage regions at various time points for an example simulation run. Note that Figure~\ref{fig:Base Simulation} only shows each agent $i$'s \emph{active} coverage region, i.e., the subset of $P_i$ that does not intersect its prohibited region $\text{Proh}_i(t)$. The family of active coverage regions does not generally form an $m$-covering of $Q$; however, elements of this family are connected and never intersect as a result of inherent collision avoidance properties.
\begin{figure} [h]
	\centering
	\includegraphics[width = 0.85\columnwidth]{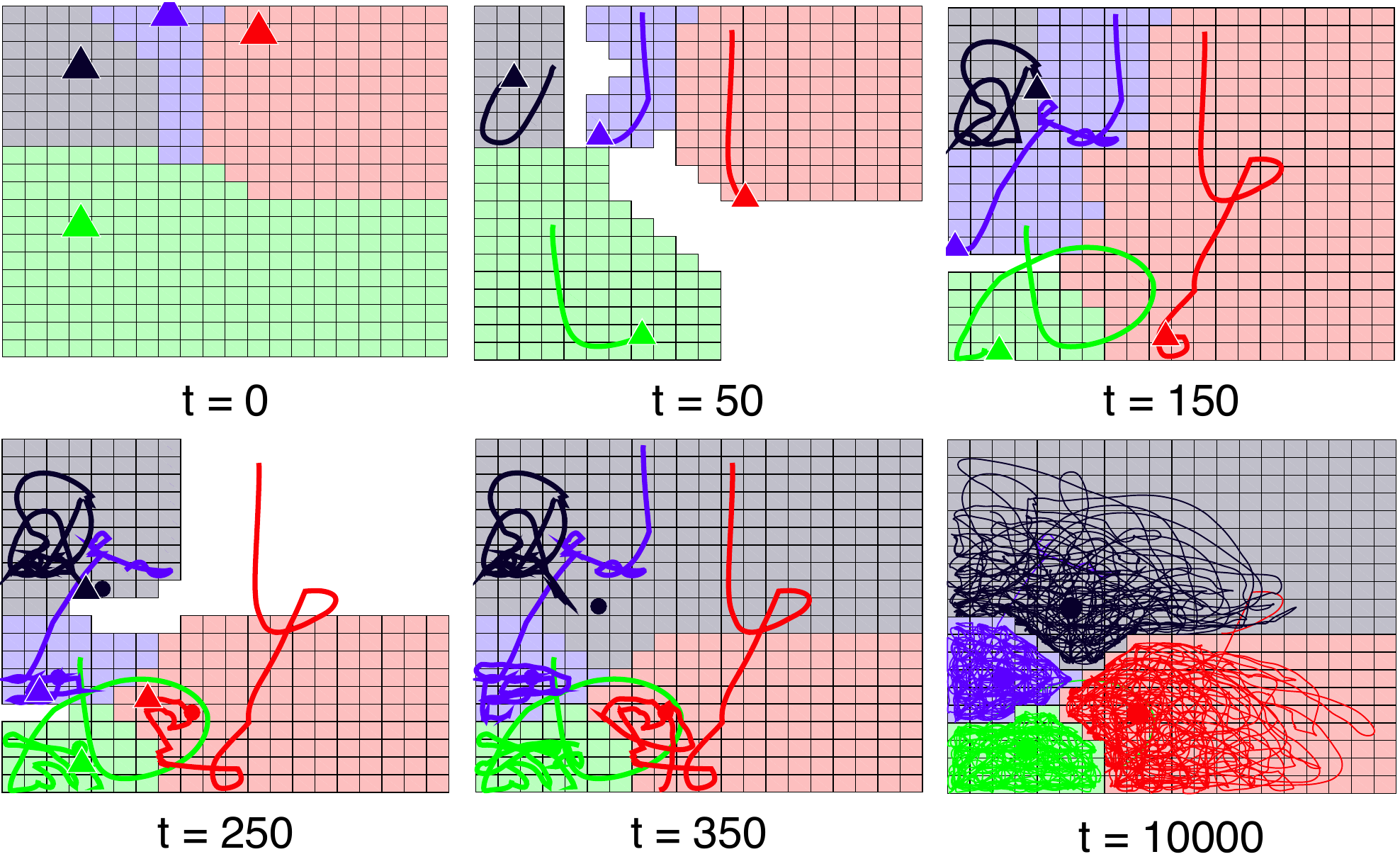}
	\caption{Snapshots of a $4$ agent surveillance mission, assuming a static Gaussian likelihood. Each agent's position, past trajectory, and \emph{active coverage region} are indicated by the colored triangle, line, and squares, resp..}
	\label{fig:Base Simulation}
\end{figure}

The left plot in Figure~\ref{fig:TimeCost} depicts the maximum amount of time that any individual subregion went uncovered, i.e. the subregion did not belong to any agent's active covering $P_i \backslash \text{Proh}_i(t)$, during each of $50$ simulation runs.  
\begin{figure} [h]
\centering
\subfigure{\includegraphics[width = 0.49\columnwidth]{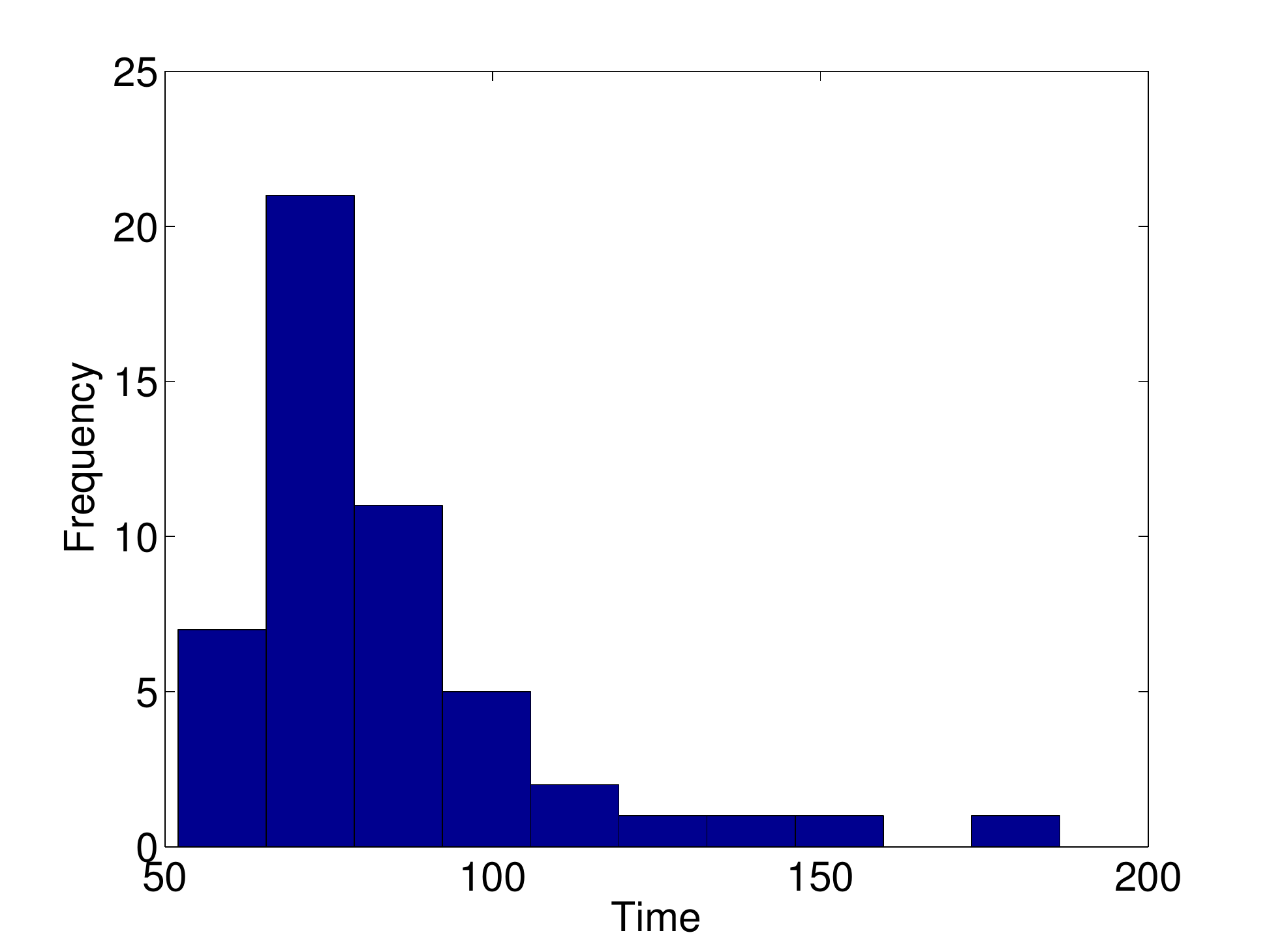}}
\subfigure{\includegraphics[width = 0.49\columnwidth]{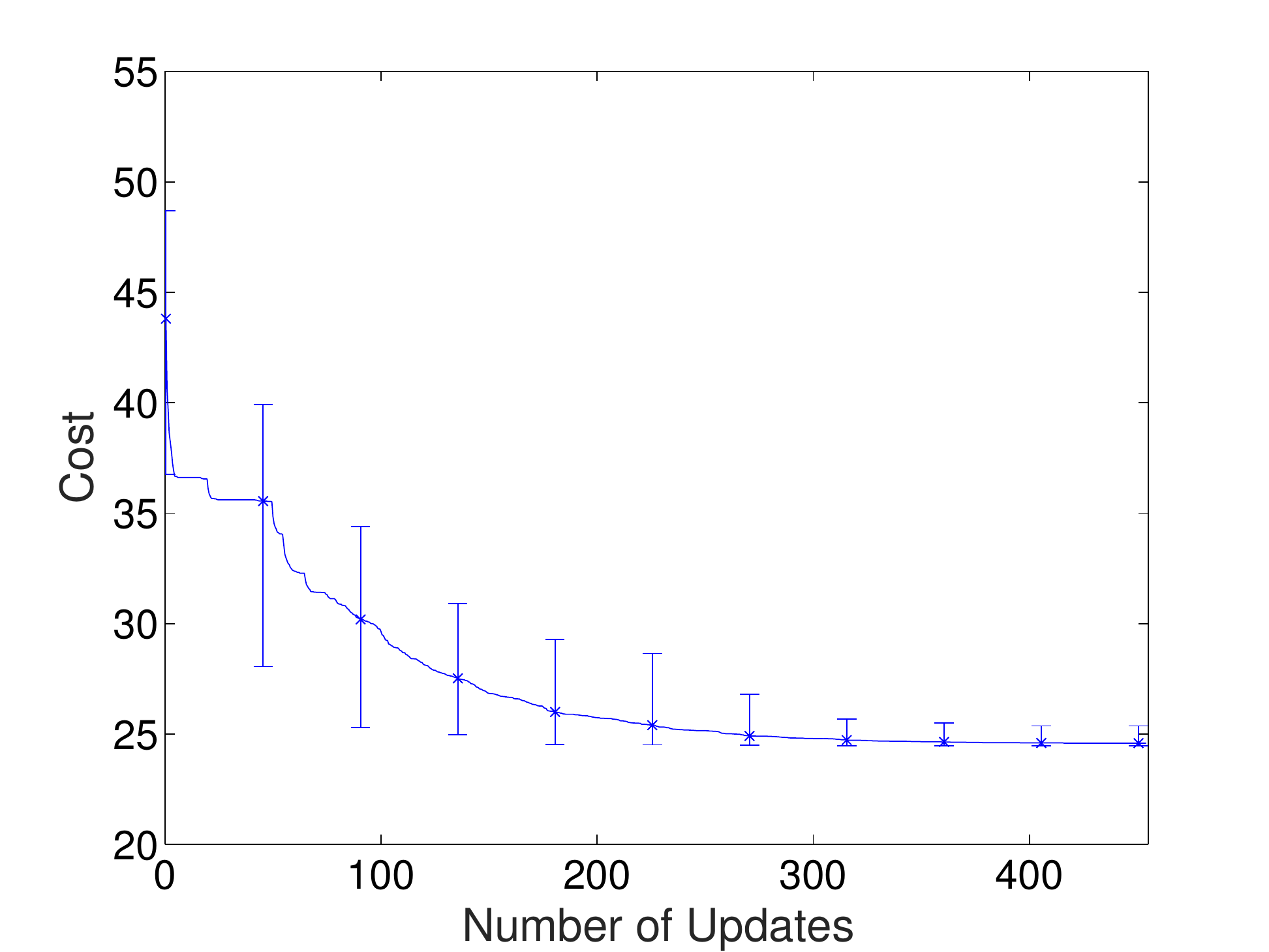}}
\caption{The maximum amount of time that any subregion went uncovered in each of $50$ simulation runs (left), and the value of the cost $H$ as a function of time, averaged over the same $50$ runs (right).}
\label{fig:TimeCost}
\end{figure}Here, the maximum amount of time that any region went uncovered was $186$ units, though most trials had maximums of less than $75$ units. This is well-below the loose bound $\overline{\Delta}+\overline{d} = 770$ predicted by Theorem~\ref{thm:characteristics} (see Remark~\ref{rem:bounds}).
The right plot in Figure~\ref{fig:TimeCost} shows the mean values of the cost function $\mathcal{H}$ as a function of time, calculated over the same $50$ simulations runs. Here, error bars represent the range of cost values achieved at select time points. The variance between runs is due to the stochastic nature of the data-exchange patterns between the agents and the base station. Notice that the cost is a non-increasing function of time, as predicted in the proof of Theorem~\ref{thm:convergence}, eventually settling as the coverage regions/generators reach their limiting configuration, e.g., see Figure~\ref{fig:Base Simulation}. These configurations are guaranteed to be Pareto optimal and, by Remark~\ref{rem:centroidal}, to form a multiplicatively weighted Voronoi partition. Since vehicles are assumed to have identical maximum speeds, we see from the limiting configuration in Figure~\ref{fig:Base Simulation} that the resultant coverage assignments provide load-balancing that takes into account the event likelihood. If the low-level trajectory planner biases trajectories according to the event likelihood, this results in desirable coverage properties. Under the modified SMC planner used here, the temporal distribution of agent locations closely resembles the spatial likelihood distribution in the limit, as shown in Figure~\ref{fig:dsmc}.
\begin{figure} [h]
	\centering
	\includegraphics[width = 0.6\columnwidth]{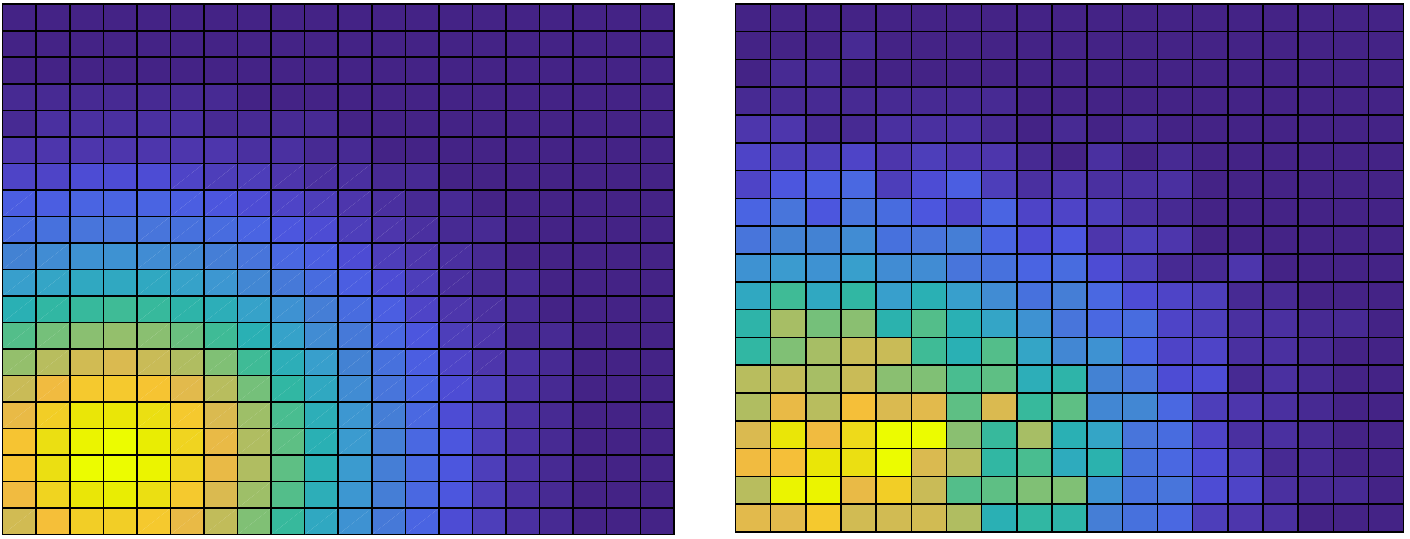}
	\caption{Comparison between the (static) event likelihood $\Phi$ (left), and the proportion of time that some agent occupied each subregion after significant time has passed (10000 units) (right).}
	\label{fig:dsmc}
\end{figure}

Further, during the simulation, no two agents ever occupied the same space due to the careful parameter manipulations employed by Algorithm~\ref{alg:partitioning}. Figure~\ref{fig:colocation} illustrates the logic governing these manipulations through a simplistic example. 
\begin{figure} [h]
	\centering
\includegraphics[width = 0.85\columnwidth]{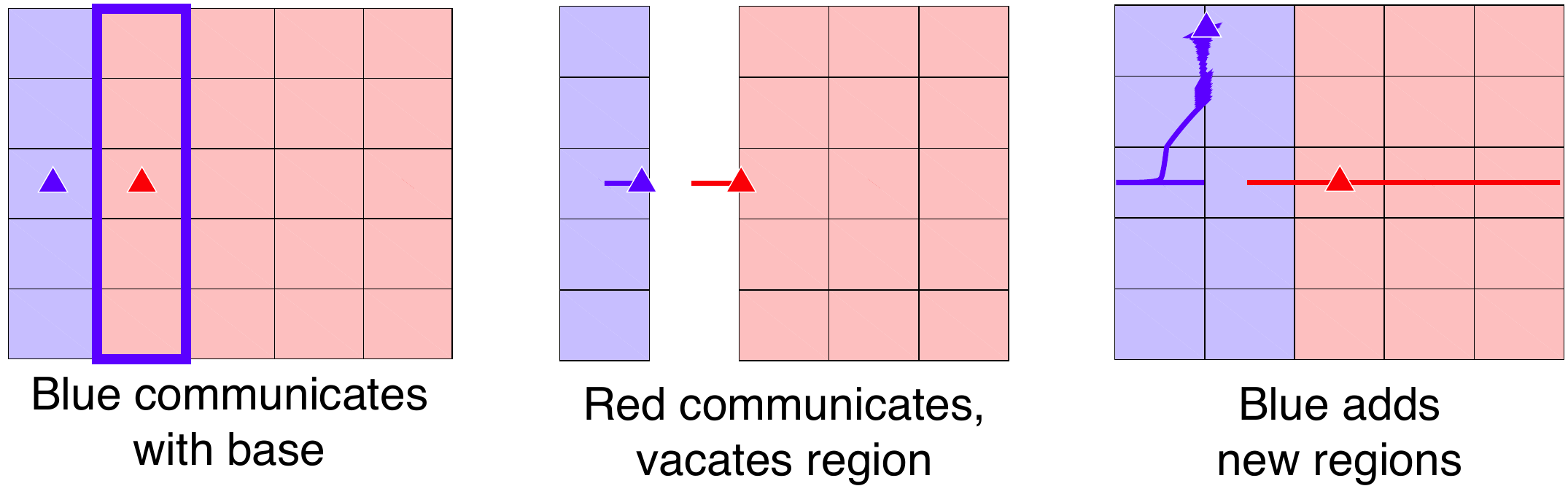}
	\caption{Simplified example illustrating how Algorithm~\ref{alg:partitioning} manipulates timing parameters to prevent agent collisions.}
	\label{fig:colocation}
\end{figure}
During the first update, the blue agent acquires some of the red agent's coverage region. Rather than immediately adding these regions to its active covering, the blue region waits until sufficient time has passed to guarantee that the red agent has updated and moved out of the reassigned regions. Under Algorithm~\ref{alg:complete}, once the red agent communicates with the base, it immediately vacates the re-assigned regions, after which the blue agent can add the region to its active covering. This procedure guarantees that no two agents will never have overlapping active coverings and thus never collide (Theorem~\ref{thm:collision}). This same logic is results in inherent collision prevention over more complex scenarios.

\subsection{Quasi-static Likelihood}

We now illustrate how the proposed coverage framework reacts to abrupt changes in the underlying likelihood, i.e., when the likelihood is quasi-static. This type of scenario is common in realistic missions, e.g., when the base-station's estimate of the underlying likelihood is only re-formulated if some agent's sensor data indicates a drastic change in the underlying landscape. For this purpose, we adopt identical parameters as in the first example, with the exception of the likelihood $\Phi$, whose spatial distribution abruptly switches at select time-points. If the switches are sufficiently spaced in comparison to the rate of convergence, then the coverage regions dynamically adjust to an optimal configuration that is reflective of the current state. For example, Figure~\ref{fig:agentschangingDist} shows the coverage region evolution after the underlying likelihood undergoes a single switch from the initial to the final density shown in Figure~\ref{fig:likelihood} at time $t = 2000$.
\begin{figure} [h]
	\centering
	\includegraphics[width = 0.49\columnwidth]{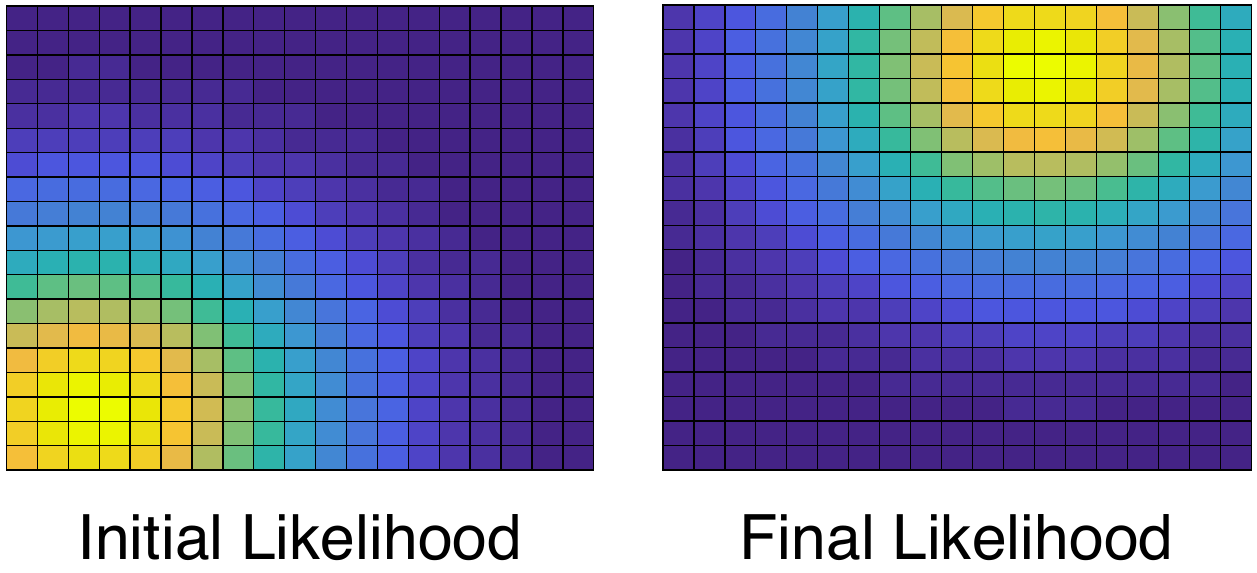}
	\caption{The initial and final likelihood $\Phi(\cdot, t)$.}
	\label{fig:likelihood}
\end{figure}
\begin{figure} [h]
	\centering
	\includegraphics[width = 0.99\columnwidth]{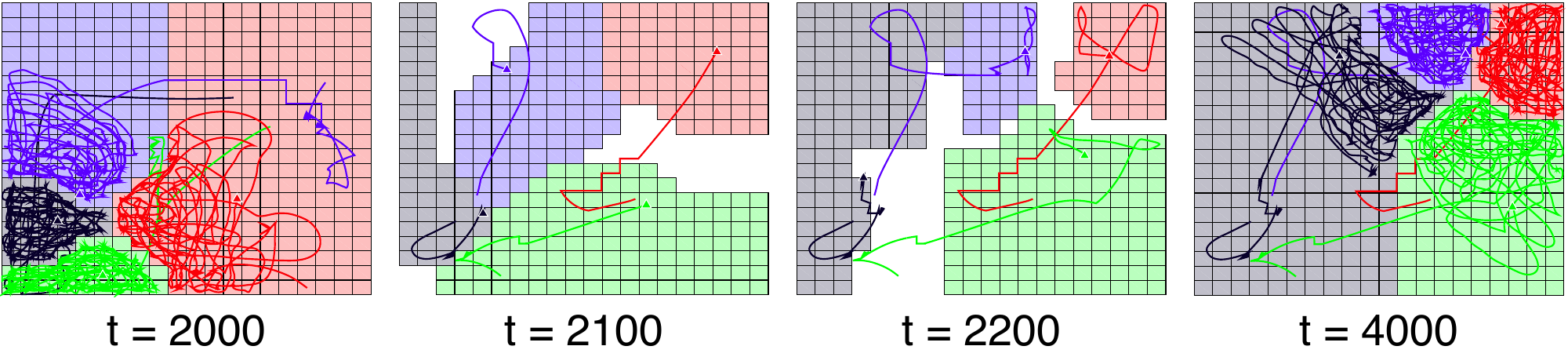}
	\caption{Coverage regions after the likelihood switches (see Fig.~\ref{fig:likelihood})}
	\label{fig:agentschangingDist}
\end{figure}
In contrast, when the underlying likelihood changes faster than the rate of convergence, coverage regions are constantly in a transient state. Despite this, the proposed framework still results in some degree of load-balancing. To illustrate, Figure~\ref{fig:CoverageChangingDists} shows the value of the cost $\mathcal{H}$ during a simulation in which the underlying likelihood switches at $12$ randomly chosen time-points over a $1000$ unit horizon. Each switch re-defined the spatial likelihood as a Gaussian distribution centered at a randomly selected location. Notice that the value of the cost function monotonically decreases between the abrupt spikes caused by changes in the underlying likelihood. Despite the fact that convergence is not reached, coverage regions quickly shift away from high-cost configurations, as indicated by the sharp decreases in the cost shortly following each switch.
\begin{figure} [h]
	\centering
	\includegraphics[width = 0.6\columnwidth]{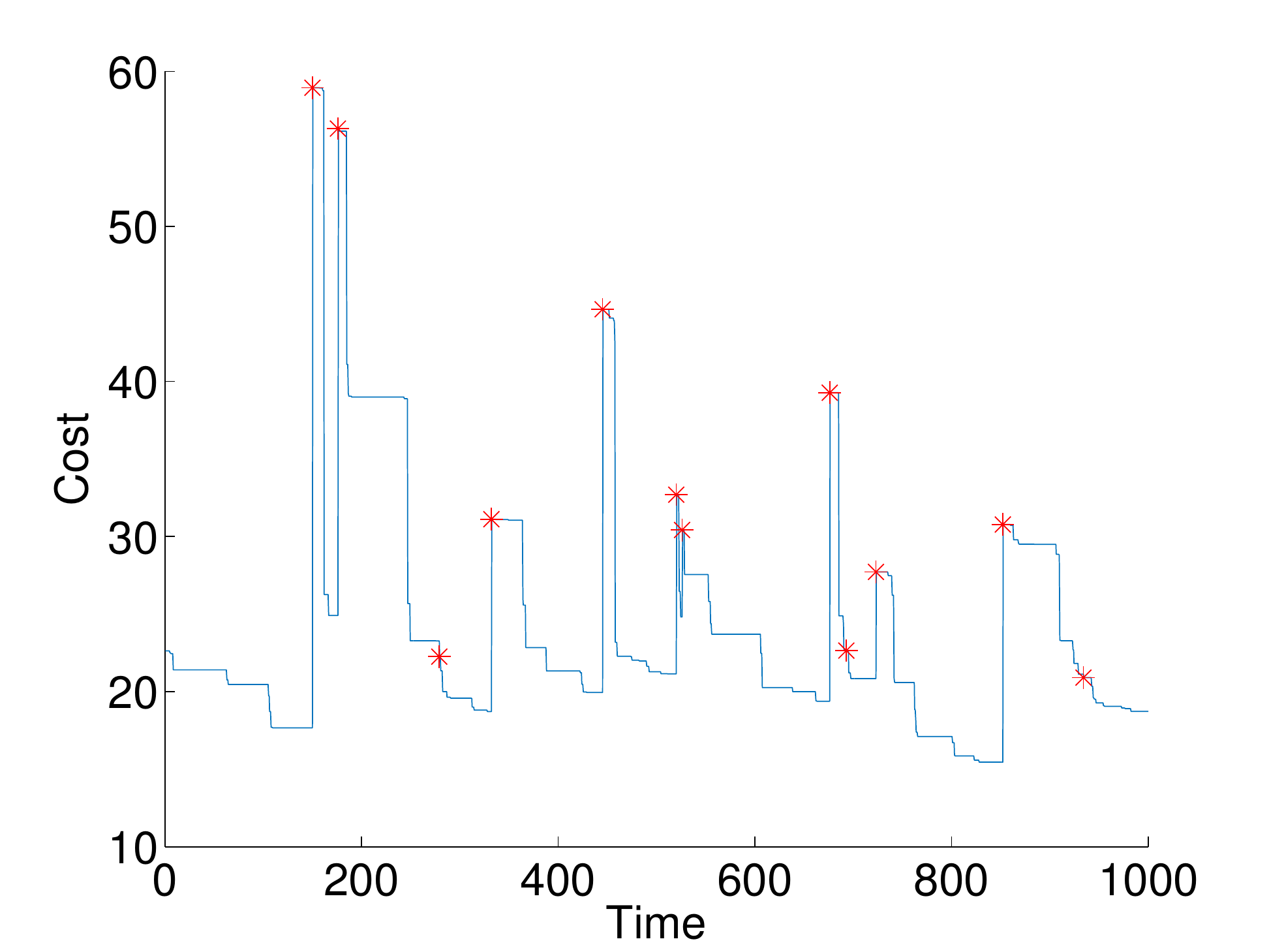}
	\caption{Evolution of the cost $\mathcal{H}$ using a quasi-static likelihood with $12$ random switches (switch times are indicated by the stars).}
	\label{fig:CoverageChangingDists}
\end{figure}

\section{Conclusion}
\label{sec:Conclusion}
This work presents a modular, decomposition-based, coverage control framework for communication-constrained multi-agent surveillance missions.
In particular, our approach uses a dynamic partitioning strategy to balance the surveillance load across available agents, requiring only sporadic and unplanned exchanges between individual agents and a base station. The partitioning update algorithm also manages high-level timing and logic parameters to guarantee that the resulting coverage assignments have geometric and temporal properties that are amenable for combination with generic single vehicle trajectory planners. Under appropriate assumptions, the proposed algorithms will produce a set of coverage regions forming a Pareto optimal partition, while also ensuring collision avoidance and quality of coverage guarantees.

Future work should further relax communication assumptions to reflect additional realistic hardware limitations, e.g., use of directional antennae for wireless transmission. Other areas of future research include the addition of peer-to-peer, in addition to central, communication, performance comparisons between specific trajectory planners when used within our framework, e.g., those involving ergodic Markov chains, and  further theoretical characterizations of performance.

\section*{Appendix: Proofs}
\begin{proposition}[Set Membership] 
Suppose Assumption~\ref{assn:initialization} holds, and that, at the time of each exchange occurring prior to the fixed time $\bar{t}\geq 0$, required algorithmic constructions are well-posed so that the base station and the communicating agent are able to update their respective variables via Algorithm~\ref{alg:partitioning}. Then, for any $k \in Q$ at any time $t \leq \bar{t}$:
\begin{enumerate}
\item $k \in P_{\text{ID}_k}$,
\item $k$ belongs to at most $2$ elements of $P$,
\item if $T_{\text{ID}_k} = 0$, then $k\notin P_j$ for any $j \neq \text{ID}_k$, and
\item if $k \in P_j$, $j\neq \text{ID}_k$, then $P_j \cap P^{\mathcal{ID}}_\ell = \emptyset$ for $\ell \notin\{j, \text{ID}_k\}$
\end{enumerate}
\label{prop:set}
\end{proposition}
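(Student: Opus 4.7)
The plan is to prove all four claims simultaneously by induction on the discrete sequence of base-station exchanges that occur in $[0,\bar t]$. The base case at $t=0$ follows from Assumption~\ref{assn:initialization}: $P=P^{\mathcal{ID}}$ is an $m$-partition and every $T_i=0$, so (1)--(3) are immediate and (4) is vacuous. Between two consecutive exchanges, $P$, $c$, $\mathcal{ID}$, and $\omega$ are frozen, so (1), (2), and (4) transfer unchanged. For (3) I would look back at the most recent reset of $T_{\text{ID}_k}$ by Algorithm~\ref{alg:timer} and show that its scheduled decay reaches zero no earlier than some agent's forced deadline under the one-to-base-station upper bound $\overline{\Delta}$; combined with non-simultaneity of exchanges (from $\underline{\Delta}>0$), this forces $T_{\text{ID}_k}>0$ strictly throughout the open inter-exchange interval.

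For an exchange by agent $i$ at time $t_0$, the timing-only branch of Algorithm~\ref{alg:partitioning} (taken when $T_i>0$ and $P_i^*=P_i$) leaves $P$, $c$, $\mathcal{ID}$, and all $T_j$ unchanged, so the invariants transfer verbatim. In the full branch the only modifications are $P_i\leftarrow P_i^*\supseteq P^{\mathcal{ID},-}_i$, the overwriting $\text{ID}_k\leftarrow i$ on $P_i^*$, the reset $T_i=\supscr{\Delta_{\text{max}}}{Bf}+\subscr{\Delta}{H}>0$, and the reset $T_j=\omega_j+\overline{\Delta}-t_0$ for every $j\neq i$ with $P_j\cap P_i^*\neq\emptyset$ (strict positivity of the latter follows from the upper bound $\overline{\Delta}$ together with non-simultaneity enforced by $\underline{\Delta}>0$, ruling out $t_0=\omega_j+\overline{\Delta}$). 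Claim (1) is then immediate since line~9 of Algorithm~\ref{alg:partitioning} makes $\text{ID}_k^+=i$ exactly on $P_i^+=P_i^*$ while the pair $(\text{ID}_k,P_{\text{ID}_k})$ is unchanged for $k\notin P_i^*$. For (3), any $k$ with $k\in P_j$ and $j\neq\text{ID}_k^+$ falls into one of two cases: either $\text{ID}_k^+=i$, where the freshly set $T_i>0$ settles the claim; or $\text{ID}_k^+=\text{ID}_k^-\neq i$, where the overlap existed pre-exchange and $T_{\text{ID}_k^-}$ either stays positive by the inductive hypothesis (if untouched) or is reset positively by Algorithm~\ref{alg:timer}.

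The principal obstacle is (4) in the full branch, which couples the geometric restrictions of Definition~\ref{def:P_plus} to the timer invariant (3). For $j\neq i$ the region $P_j$ is preserved and $P^{\mathcal{ID},+}_\ell\subseteq P^{\mathcal{ID},-}_\ell$ for every $\ell\neq i$, so inductive (4) delivers $P_j\cap P^{\mathcal{ID},+}_\ell=\emptyset$ whenever $\ell\notin\{j,\text{ID}_k^-,i\}$. The delicate sub-case $\ell=\text{ID}_k^-$ (only relevant when $\text{ID}_k^+=i\neq\text{ID}_k^-$) is closed by noting that condition~2(a) of Definition~\ref{def:P_plus} allows $k$ to be added to $P_i^*$ from $P_{\text{ID}_k^-}$ only when $T_{\text{ID}_k^-}=0$, whereas any putative witness $k''\in P_j\cap P^{\mathcal{ID},-}_{\text{ID}_k^-}$ would force $T_{\text{ID}_k^-}>0$ via inductive (3). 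The case $\ell=i$ reduces to $P_j\cap P_i^*=\emptyset$: the hypothesis $k\in P_j$ with $\text{ID}_k^-\neq j$ forces $T_j>0$ by inductive (3), and Definition~\ref{def:P_plus} condition~2(a) then forbids any vertex of $P_j$ from joining $P_i^*$. Finally, (2) follows as a corollary: if $k\in P_i^*\cap P_{j_1}\cap P_{j_2}$ for distinct $j_1,j_2\neq i$, then Definition~\ref{def:P_plus} condition~2(a) forces $T_{j_1}=T_{j_2}=0$, and inductive (3) then forces $\text{ID}_k^-$ to equal both $j_1$ and $j_2$, a contradiction.
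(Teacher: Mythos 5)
Your overall strategy---induction over the sequence of base-station exchanges, using the timer resets of Algorithm~\ref{alg:timer}, condition~2(a) of Definition~\ref{def:P_plus}, and the $\overline{\Delta}$ deadline to bound the lifetime of an overlap---is essentially the same as the paper's (which inducts on the reassignment times of a fixed vertex rather than on all exchanges; the difference in granularity is immaterial). Your handling of statements (1), (2), and (3), including the observation that the deadline $\omega_j+\overline{\Delta}$ together with $\underline{\Delta}>0$ keeps the relevant timer strictly positive throughout the overlap window, matches the paper's argument.

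There is, however, a genuine gap in your treatment of statement (4), at the step where you close the case $\ell=i$ by asserting that ``$k\in P_j$ with $\text{ID}_k^-\neq j$ forces $T_j>0$ by inductive (3).'' Statement (3) constrains $T_{\text{ID}_k}$, the timer of the \emph{gaining} agent: its contrapositive gives $T_{\text{ID}_k^-}>0$, and says nothing about $T_j$, the timer of the \emph{losing} agent. The fact you actually need --- that $T_j>0$ whenever $P_j$ still contains a vertex whose identifier has been reassigned away from $j$ (equivalently, $T_j=0$ implies $P_j=P_j^{\mathcal{ID}}$) --- is not derivable from statements (1)--(4) alone; it is an additional invariant that must be carried through the induction. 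It is what rules out $P_j\cap\supscr{P_i}{add}(\cdot)\neq\emptyset$ via condition~2(a), and it is also what guarantees that a region losing vertices at a reassignment is ``clean'' ($P_j^-=P_j^{\mathcal{ID},-}$), which your argument implicitly uses when a \emph{new} overlap is created (the sub-case $\text{ID}_k^-=j$, $\text{ID}_k^+=i$, which your case analysis for (4) does not address). The paper establishes this invariant explicitly by tracking the reset $T_j^+=\omega_j+\overline{\Delta}-t_0$ through the reassignment window and noting that agent $j$ must communicate---and thereby shed the reassigned vertices---strictly before this timer expires. The repair is available to you (it is the same deadline argument you already sketch for $T_{\text{ID}_k}$), but as written the claimed justification is a non sequitur and the induction for (4) does not close.
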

\renewcommand*{\proofname}{Proof}
\begin{proof}
Fix $\bar{t}\geq 0$, $k \in Q$. When $t = 0$, $P = P^{\mathcal{ID}}$ is an $m$-partition of $Q$, implying the proposition. 
Since $k$ is not removed from $P_{\text{ID}_k}$ or added to any $P_{i}$ with $i \neq \text{ID}_k$ until its first \emph{reassignment}, i.e., when $\text{ID}_k$ is changed. Thus, the proposition is true for all $t$ prior to the first reassignment.  Suppose the proposition holds for all $t$ prior to the $p^{\text{th}}$ reassignment, which occurs at $t = t_0$. Suppose that $\text{ID}^-_k = j$, $\text{ID}^+_k = i \neq j$. Algorithm~\ref{alg:partitioning} defines $P_i^{{\mathcal{ID}},+} = P^+_i = P^+_{\text{ID}^+_k}$. Thus, $k \in P^+_{\text{ID}^+_k} = P^+_i$ and remains in these sets until another reassignment. Statement 1 holds, therefore, for all $t$ prior to the $p+1^{\text{st}}$ reassignment.
Now note that, by Alg.~\ref{alg:timer}, reassignment cannot occur at $t_0$ unless $T_j^- = 0$. By inductive assumption, statement 3 of the proposition holds when $t = t_0^-$, implying that $k \notin P^-_\ell$ for any $\ell \neq j$. Upon reassignment, the timers $T_j, T_i$ are modified such that $T^+_j, T^+_i>\omega^+_j+\overline{\Delta}-t_0$. 
Since 1) $\text{ID}_{k}$ is unchanged as long as $T_j > 0$, and 2) agent $j$ communicates with the base prior to time $\omega^+_j+\overline{\Delta}$ (when $k$ is removed from $P_j$), we deduce that $k$ solely belongs to $P_j$ and $P_i$ until the $p+1^{\text{st}}$ reassignment.
By the same logic, at any time $t\geq t_0^+$ at which $T_i = 0$ and the $p+1^{\text{st}}$ reassignment has not yet occurred, $k \in P_i$ exclusively (addition to other sets in $P$ without reassignment is impossible). We deduce that statements 2 and 3 hold for any $t$ prior to the $p+1^{\text{st}}$ reassignment. Now note that, 1) statement 3 holds prior at time $t_0^+$, implying $P_j^- = P^{\mathcal{ID},-}_i$, 2) no agent claims additional points vertices from $P_j^+$ unless $T_j = 0$, and 3) $k$ cannot be added to a coverage region without reassignment. As such, $P_j \cap P^{\mathcal{ID}}_\ell = \emptyset$ for any $\ell \notin\{j, \text{ID}_k = i\}$ prior to the $p+1^{\text{st}}$ reassignment. From persistent communication and the bound $\underline{\Delta}$, the proposition follows by induction.
\end{proof}

\renewcommand*{\proofname}{Proof of Theorem~\ref{thm:well-posedness}}
\begin{proof}
It suffices to show that Def.~\ref{def:P_plus} is well-posed (Prop.~\ref{prop:well-posed}) whenever additive sets are required. We proceed by induction.
When $t = 0$, $P^\mathcal{ID} = P$ is a connected $m$-partition of $Q$ and thus, for any $i$, $P_i^{\mathcal{ID}}$ is disjoint from $\bigcup_{j \neq i} P_j$. The same holds prior to the first exchange between the base station and some agent. Thus, by Prop.~\ref{prop:well-posed}, the first call to Alg.~\ref{alg:partitioning} is well-posed. 
Now assume that, for all times $t$ prior to the $p^{\text{th}}$ call to Alg.~\ref{alg:partitioning}, 1) $P^\mathcal{ID}$ is a connected $m$-partition of $Q$, and 2) if an exchange that requires construction of  $\supscr{P_i}{add}$ occurs, then $P_i^{\mathcal{ID}}\cap\left(\bigcup_{j \neq i} P_j\right) = \emptyset$. This implies that Prop.~\ref{prop:set} holds at any time $t$ prior to the $p+1^{\text{st}}$ exchange. 
Assume the $p^{\text{th}}$ communication occurs when $t = t_0$ and involves agent $i$.
By the uniqueness of identifiers, $P^{\mathcal{ID},+}$ is an $m$-partition of $Q$. To show that $P^{\mathcal{ID},+}$ is connected, first note that $P_i^{\mathcal{ID},+} = P_i^+$. Since $P_i^+ = \supscr{P_i}{add}(c_i^+)$ (connected by Def.~\ref{def:P_plus}), or $P_i^{\mathcal{ID},+} = P^{\mathcal{ID},-}_i$ (connected by inductive assumption) connectivity of $P_i^{\mathcal{ID},+}$ follows. 
Now consider $P_j^{\mathcal{ID}}$, $j \neq i$. If $T_j^- \neq 0$, then $P^{\mathcal{ID},-}_j = P_j^{\mathcal{ID},+}$ and connectivity of $P_j^{\mathcal{ID},+}$ follows. 
Suppose $T_j^- = 0$ and $P_j^{\mathcal{ID}, +}$ is not connected.
Then, $P_j^+$ is not connected: if it were, there would exist $k \in P_j^+$ with $\text{ID}^+_k \notin\{i,j\}$, contradicting Prop.~\ref{prop:set}, satement. Thus, there exists $k_1 \in P_j^{\mathcal{ID}, +}$ such that 1) $k_1 \notin \supscr{P_i}{add}(c_i^+)$, and 2) any length-minimizing path in $G(P^{-}_j)$ spanning $k_1$ and $c^+_j$ contains some $k_2\in \supscr{P_i}{add}(c_i^+)= P_i^+(c_i^+)$. Select one such path and vertex $k_2$. Assume without loss of generality that there exists an edge $\{k_1,k_2\} \in \mathcal{E}$. Def.~\ref{def:P_plus} implies $\frac{1}{s_i}d_{P_i^+}(k_2, c_i^+) < \min\{\frac{1}{s_\ell}d_{P^+_\ell}(k_2, c^+_\ell)|\ell \neq i, k_2 \in P^+_i\}$ and thus $\frac{1}{s_i}d_{P_i^+\cup\{k_1\}}(k_1, c_i^+) < \frac{1}{s_j}d_{P^+_j}(k_1, c^+_j)$. Since $T^-_j = 0$ and $\text{ID}^-_{k_1} = j$, Prop.~\ref{prop:set} implies $\frac{1}{s_i}d_{P_i^+\cup\{k_1\}}(k_1, c_i^+) <\frac{1}{s_j}d_{P^+_j}(k_1, c^+_j) = \min\{\frac{1}{s_\ell}d_{P^+_{\ell}}(k_1, c^+_\ell)|\ell \neq i, k_1 \in P_\ell^+\}$, contradicting $k_1 \notin \supscr{P_i}{add}(c_i^+)$. Thus, $P_j^{\mathcal{ID},+}$ is connected. Invoking Prop.~\ref{prop:set} statement 3, the inductive assumption therefore holds for all times prior to the $p+1^{st}$ exchange, thereby implying well-posedness of the first $p+1$ exchanges.
\end{proof}

\renewcommand*{\proofname}{Proof of Theorem~\ref{thm:set_properties}}
\begin{proof}\hfill\\
\underline{\emph{Statement 1}}: The proof of Thm.~\ref{thm:well-posedness} implies the statement.

\noindent\underline{\emph{Statement 2}}: $P$ is an $m$-covering of $Q$ since $P^{\mathcal{ID}}$ is always an $m$-partition of $Q$ (statement 1), and $P_i^{\mathcal{ID}} \subseteq P_i$ for any $i$ (Prop.~\ref{prop:set}, statement 1). $P$ is connected, since $P_i=P_i^{\mathcal{ID}}$ (connected by statement 1) immediately following any update, and $P_i$ is unchanged in between updates. 

\noindent\underline{\emph{Statement 3}}: It suffices to show that $\text{ID}_{c_i} = i$  for any $t$ and any $i$: this would imply $c_i \neq c_j$ for any $i \neq j$, and  $c_i \in P_i$ (Prop.~\ref{prop:set}).  By assumption, $\text{ID}_{c_i} = i$ for all $i$ at $t = 0$. The same holds for any $t$ prior to the first exchange between \emph{any} agent and the base station. Suppose $\text{ID}_{c_i} = i$ for all $i$ (thus $c_i \neq c_j$ for any $i \neq j$) prior to the $p^{\text{th}}$ exchange. If agent $i$ is the $p^{\text{th}}$ communicating agent, lines $2$ and $9$ of Alg.~\ref{alg:partitioning} imply $\text{ID}^+_{c_i^+} = i$. Since $d_{P^-_j}(c^-_j, c^-_j) = 0$ for any $j$, we have $c^+_j \notin \supscr{P_i}{add}(c_i^+)$. Thus, $\text{ID}^+_{c^+_j} = j$, and induction proves the statement.

\noindent\underline{\emph{Statements 4 and 5}}: Statement 4 follows from~\eqref{eqn:phi_temp}, noting that $P_i^A = P_i$.
Statement 5 holds by assumption when $t = 0$. 
Let $k \in Q$, and consider times when $\text{ID}_k$ changes ($k$ is \emph{re-assigned}). Since $\text{supp}(\Phi^A_j(\cdot, t)) = P_j = P^A_j$ for any $j$ at $t = 0$, statement 4 implies that, for any $t$ prior to the first reassignment, $k$ belongs exclusively to $\text{supp}(\Phi^A_{\text{ID}_k}(\cdot, t))$, implying statement 5.
Suppose statement 5 holds for all $t$ prior to the $p^{\text{th}}$ reassignment (occurring at time $t_0$), and $\text{ID}^-_k = j$, $\text{ID}^+_k = i\neq j$. Then, $T_j^- = 0$ and $k$ belongs exclusively to $P_j^-$ when $t = t_0^-$ (Prop.~\ref{prop:set}). By Alg.~\ref{alg:partitioning} and~\ref{alg:timer}, $k \in \supscr{P_i}{$A$,pd,+}$ and $T_i^+ > \omega^+_j +\overline{\Delta}-t_0\geq \tau^{A,+}_i$. Since $\text{supp}(\Phi^A_i(\cdot, t))$ is not redefined for a duration of at least $T^+_i\geq \tau_i^{A,+}$, ~\eqref{eqn:phi_temp} implies $k\notin\text{supp}(\supscr{\Phi_i}{A}(\cdot, t))$ when $t \in [t_0^+, t_0^++\tau^{A,+}_i]$. Since $k$ is re-assigned when $t =t_0$, $k \in P_i^+\backslash P^{\mathcal{ID},-}_i$ and $T_j^+ = \omega^+_j +\overline{\Delta}-t_0$. Agent $j$ will communicate with the base at some time $t_1<t_0+T_j^+ = \omega_j^++\overline{\Delta}<t_0+T_i^+$. Thus, $T_i>0$ when $t = t_1$, and $k$ is removed from both $P_j$ and $\text{supp}(\Phi_j^A(\cdot, t))$. Thus, for all $t >t_0+\tau^+_i$ and before the $p+1^{\text{st}}$ reassignment, $k$ belongs exclusively to  $\text{supp}(\phi_i(\cdot, t))$. 
\end{proof}
\renewcommand*{\proofname}{Proof of Theorem~\ref{thm:characteristics}}
\begin{proof}Thm.~\ref{thm:set_properties} implies statement 1. 

\noindent\underline{\emph{Statement 2}}: For any $i$,  1) $T_i = 0$ when $t = 0$, and 2) $\frac{1}{s_i}d_{Q'}(k_1,k_2) \leq \overline{d}$ for any $Q' \subseteq Q$, $k_1,k_2 \in Q'$. Thus, it is straightforward to show that, for any $i$ and any $T$, we have the bound $T_i \leq\overline{\Delta}+\subscr{\Delta}{H}+\overline{d}$. We show that, for any $i$, the bound $\tau_i^A - t+ \omega^A_i \leq T_i - \subscr{\Delta}{H}$ holds by induction: $T_i = 0$ and $\tau_i^A = -\subscr{\Delta}{H}$ when t = 0 , so $\tau_i^A - t+ \omega^A_i = \tau_i^A \leq T_i-\subscr{\Delta}{H}$. The bound similarly holds prior to the first exchange involving \emph{any} agent, since $\tau_i^A - t = \tau_i^A - t+\omega^A_i \leq -\subscr{\Delta}{H}\leq T_i-\subscr{\Delta}{H}$ at any such time. Assume the bound holds prior to the $p^{\text{th}}$ update (occuring at $t = t_0$). Consider $2$ cases: if agent $i$ is the communicating agent, then $\tau_i^{A,+} - t+ \omega^A_i  = \tau^+_i \coloneqq T^+_i-\subscr{\Delta}{H}$; if not, then $\tau_i^{A,+} = \tau_i^{A,-}$ and either 1) $T_i^- = T_i^+$ implying the desired bound, or 
2) $T_i^- = 0$ and $\tau^{A,+}_i -t_0+ \omega^{A,+}_i = \tau^{A,-}_i - t+ \omega^{A,-}_i  \leq T_i^--\subscr{\Delta}{H} = -\subscr{\Delta}{H} \leq (\omega_i^{A,+}+\overline{\Delta}-t_0)  -\subscr{\Delta}{H} = T_i^+ -\subscr{\Delta}{H}$.
This logic extends to all times prior to the $p+1^{\text{st}}$ exchange and the desired bound follows by induction.

Using the bounds from the previous paragraph, we have $\tau_i^A + \omega_i^A \leq t+\overline{\Delta}+\overline{d}$. Fix $t$ and $k \in \text{Proh}_i(t)$. Then, $k \in P^{A,+}_i = P^+_i$, $k \in \supscr{P_i}{$A$,pd,+}$, and $t-\omega^{A,+}_i < \tau_i^{A,+}$ (`$+$' is with respect to the fixed time $t$).
Further, over the interval $[t,\omega^{A,+}_i+\tau_i^{A,+}]$, the vertex $k$ is not re-assigned, $P_i$ is not augmented, and $\tau_i^A$ is unchanged . 
Therefore, $k\notin\text{Proh}_i(t)$ at time $\omega^{A,+}_i+\tau^{A,+}_i$. 
If $t_0 \coloneqq \omega^{A,+}_i+\tau^{A,+}_i$, we have $t<t_0 \leq t+\overline{\Delta}+\overline{d}$. Since $T_i \geq \tau_i^A+\subscr{\Delta}{H}$ at time $\omega^{A,+}_i$, $k$ is not re-assigned during the interval $[\omega^{A,+}_i, \omega^{A,+}_i+T^+_i] \supseteq [\omega^{A,+}_i, t_0+\subscr{\Delta}{H}] \supseteq [t_0, t_0+\subscr{\Delta}{H}]$. Thus $k \in P_i\backslash \text{Proh}_i (\cdot)$ over the same interval.


\noindent\underline{\emph{Statement 3}}: Fix $t$ and suppose $k \in P_i^-\backslash P_i^+$ (in this proof, `$+,-$' are with respect to $t$). Then, 1) $\text{ID}_k$ changed ($k$ was reassigned) at time $t_0  < t$, 2) agent $i$ communicates with the base at time $t$, and 3) no exchanges involving agent $i$ occurred during the interval [$t_0,t)$. Upon reassignment at time $t_0$, Alg.~\ref{alg:timer} specifies that 1) $T_i$ is reset to value $\omega_i^{A,-}+\overline{\Delta}-t_0$, thus $P^{\mathcal{ID}}_i$ is unchanged over the interval $[t_0, t)$, 2) $k$ is added to  $\supscr{P_{\text{ID}_k}}{$A$,pd}$, and 3)  $\tau^A_{\text{ID}_k}$, $T_{\text{ID}_k}$ are given values of at least
\begin{equation*}
\tilde{\omega} \coloneqq \underset{\tilde{k} \in P_i^-\backslash P_i^+}{\max}\left\{\omega_i^{A,-}+\overline{\Delta}+\frac{1}{s_i}d_{P^-_i}\left(\tilde{k}, P_i^{\mathcal{ID},-}\right) - t_0\right\},
\end{equation*}
implying that $P_{\text{ID}_k}$, $\text{Proh}_{\text{ID}_k}(\cdot)$ remain unchanged over the interval $(t_0, \tilde{\omega}] \supseteq (t_0, t+ \frac{1}{s_i}d_{P^-_i}(k, P^{\mathcal{ID},-}_i)]\supseteq (t_0, t]$. 

Since coverage regions are connected and non-empty (Thm.~\ref{thm:set_properties}) and $P_i^-\cap P^{\mathcal{ID}}_\ell  = \emptyset$ for any $\ell \notin \{i, \text{ID}^+_k\}$ on the interval $(t_0, t]$ (Prop.~\ref{prop:set}), 1) there exists a path of length $d_{P_i^-}(k, P_i^{\mathcal{ID},-})$ from $k$ into $P_i^{\mathcal{ID},-}$ and every vertex along any such path (except the terminal vertex) lyies within $P_i^- \backslash P_i^+$, and 2) $P_i^- \backslash P_i^+  \subseteq \text{Proh}_{\text{ID}_k^+}$ over the interval $(t_0, \tilde{\omega}] \supseteq [t,  t+ \frac{1}{s_i}d_{P^-_i}(k, P^{\mathcal{ID},-}_i)]$. Since 1) each vertex belongs to no more than two sets simultaneously (Prop.~\ref{prop:set}), 2) $k \in P_i^-\backslash P_i^+$, and 3) no agent claims any vertex in the set $\text{Proh}_{\text{ID}_k^+}$ when $T_{\text{ID}_k^+}>0$, vertices along the path (excluding the terminal vertex) do not belong to $P_j$ with $j \neq \text{ID}_k$ over the interval $[t,  t+ \frac{1}{s_i}d_{P^-_i}(k, P^{\mathcal{ID},-}_i)]$. 
To complete the proof, note that Alg.~\ref{alg:timer} implies $T_i^+ > \frac{1}{s_i}d_{P^-_i}(k, P^{\mathcal{ID},-}_i)$, and thus $P_i^{\mathcal{ID},-} \subseteq P_i^{\mathcal{ID}}$ over $[t,  t+ \frac{1}{s_i}d_{P^-_i}(k, P^{\mathcal{ID},-}_i)]$.
\end{proof}
\begin{proposition}[Cost]
Suppose Assumption~\ref{assn:initialization} holds and that, upon each exchange, the base station and the communicating agent update their respective variables via Alg.~\ref{alg:partitioning}.
If $\Phi(\cdot,t_1) = \Phi(\cdot, t_2)$ for all $t_1,t_2$, then
$\mathcal{H}(c, P^{\mathcal{ID}}, \cdot) = \mathcal{H}(c, P, \cdot)$.
\label{prop:H_min}
\end{proposition}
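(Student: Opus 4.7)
My plan is a per-vertex analysis carried out by induction on the sequence of base-station exchanges. Because both $\mathcal{H}(c,P,\cdot)$ and $\mathcal{H}(c,P^{\mathcal{ID}},\cdot)$ are weighted sums of per-vertex quantities against $\Phi$, it suffices to prove the pointwise identity
\[
\min\left\{\tfrac{1}{s_\ell}\, d_{P_\ell}(k,c_\ell)\,\bigm|\,k \in P_\ell\right\} \;=\; \tfrac{1}{s_{\text{ID}_k}}\, d_{P_{\text{ID}_k}^{\mathcal{ID}}}(k,c_{\text{ID}_k})
\]
for every $k \in Q$ at every time. The easy direction ($\leq$) is immediate from $P^{\mathcal{ID}}_{\text{ID}_k} \subseteq P_{\text{ID}_k}$, so the real work lies in the reverse inequality.

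The base case is trivial: Assumption~\ref{assn:initialization} gives $P = P^{\mathcal{ID}}$ at $t=0$, so each $k$ lies in $P_{\text{ID}_k} = P_{\text{ID}_k}^{\mathcal{ID}}$ uniquely. For the inductive step at an update at $t_0$ by agent $i$, I would treat vertices in two cases. If $\text{ID}_k^+ = i$, then $k \in P_i^+ = P_i^{\mathcal{ID},+}$, so the $\ell = i$ term already matches the right-hand side, and for any $\ell \neq i$ with $k \in P_\ell^+$, Definition~\ref{def:P_plus}(2b) applied to $P_i^+ = \supscr{P_i}{add}(c_i^+)$ furnishes a strict inequality that forces the minimum to be attained at $\ell = i$. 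If instead $\text{ID}_k^+ = j \neq i$, then $P_j^+ = P_j^-$, $c_j^+ = c_j^-$, and $P_j^{\mathcal{ID},+} = P_j^{\mathcal{ID},-} \setminus P_i^+$, and I would combine the inductive hypothesis at $t_0^-$ with Proposition~\ref{prop:set} (which restricts both the number and the identity of the elements of $P^+$ that can contain $k$) to propagate the identity across the update.

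The main obstacle is the subclaim, needed for the $\ell = \text{ID}_k$ term in the minimum, that $d_{P_j}(k, c_j) = d_{P_j^{\mathcal{ID}}}(k, c_j)$ for every $j$ and every $k \in P_j^{\mathcal{ID}}$; a priori, shortest paths in $G(P_j)$ could shortcut through vertices of $P_j \setminus P_j^{\mathcal{ID}}$. I plan a contradiction argument: if such a shortcut existed, then since $k, c_j \in P_j^{\mathcal{ID}}$ the purported shortest path must cross a boundary edge $\{y, w\}$ with $y \in P_j^{\mathcal{ID}}$ and $w \in P_j \setminus P_j^{\mathcal{ID}}$. By Proposition~\ref{prop:set}(4), $w$ lies uniquely in $P_{\text{ID}_w}^{\mathcal{ID}}$ with $\text{ID}_w \neq j$ and was added through an additive-subset construction during a past update of agent $\text{ID}_w$. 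Maximality of that additive subset means the candidacy inequality of Definition~\ref{def:P_plus}(2b) must fail at $y$; combining this failure with the inequality holding at $w$ and the triangle inequality across the edge $\{y, w\}$ should yield a strict inequality contradicting the shortest-path optimality of the sub-path $y \to w \to \cdots \to c_j$.
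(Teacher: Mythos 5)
Your plan is essentially the paper's own argument: an induction over base-station exchanges with a per-vertex accounting, where newly claimed vertices are handled by the strict inequality in Definition~\ref{def:P_plus}(2b) and the remaining vertices by the key lemma that shortest paths from $c_j$ inside $G(P_j)$ gain nothing from the vertices of $P_j\setminus P_j^{\mathcal{ID}}$, proved by the same boundary-edge contradiction against maximality of the additive subset (the paper phrases the contradiction as ``the boundary vertex should then have been claimed'' rather than as a violation of shortest-path optimality, but the mechanism is identical). Your pointwise formulation of the identity is a mild strengthening that would not even need $\Phi$ to be static, but otherwise the decomposition, the inductive structure, and the key lemma all coincide with the paper's proof.
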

\renewcommand*{\proofname}{Proof}
\begin{proof}
Since $\Phi$ is static, $\mathcal{H}(\cdot, \cdot, t_1) = \mathcal{H}(\cdot, \cdot, t_2)$ for any $t_1, t_2$. When $t = 0$, $P = P^{\mathcal{ID}}$ and thus $\mathcal{H}(c, P^{\mathcal{ID}}, 0) = \mathcal{H}(c, P, 0)$. The same is true prior to the first exchange between \emph{any} agent and the base. Suppose that immediately prior to the $p^{\text{th}}$ exchange (occurring at $t = t_0$, involving agent $i$), we have $\mathcal{H}(c^-, P^{\mathcal{ID}, -}, t_0) = \mathcal{H}(c^-, P^-, t_0)$. Recall that, for any agent $j$,  $P_j$ and $P^{\mathcal{ID}}_j$ coincide immediately following any exchange involving agent $j$ and, if agent $j$ claims vertices from $P_i$, then Alg.~\ref{alg:timer} guarantees that agent $i$ will communicate with the base station before any additional vertices are claimed by other agents. At the time of the $p^{\text{th}}$ update, this logic, along with Prop.~\ref{prop:set}, implies that $P^{\mathcal{ID},-}_i\cap P_j^- = \emptyset$, for all $j \neq i$. Noting that $c_i^+ \in P_i^{\mathcal{ID},-}$, we deduce that any single $k \in P^{\mathcal{ID},-}_i$ contributes equally to $\mathcal{H}(c^+, P^{\mathcal{ID},+}, t_0)$ and $\mathcal{H}(c^+, P^+, t_0)$.  If $k \in \supscr{P_i}{add}(c_i^+)\backslash P_i^{\mathcal{ID},-}$, then for any $j\neq i$ such that $k \in P^+_j$, we have $\frac{1}{s_i}d_{P_i^+}(k, c_i^+) < \frac{1}{s_j}d_{P_j^+}(k, c^+_j)$ (Def.~\ref{def:P_plus}). Therefore, $k$ contributes equivalently to the values of both  $\mathcal{H}(c^+, P^{\mathcal{ID},+}, t_0)$ and $\mathcal{H}(c^+, P^+, t_0)$. 
Now suppose $k \in P_j^+ \backslash P_i^+$, where $P_j^+ \cap P^+_i \neq \emptyset$. We show that $d_{P_j^{\mathcal{ID},+}}(c_j^+, k)  = d_{P_j^{+}}(c_j^+, k)$: if a length-minimizing path in $G(P_j^+)$ between $c_j^+$ and $k$ is also contained in $G(P_j^{\mathcal{ID},+})$, then the result is trivial. Suppose that every such minimum length path does leave $G(P_j^{\mathcal{ID},+})$. From Prop.~\ref{prop:set} statement 4, every $\bar{k} \in P_j^+$ must satisfy either $\text{ID}^+_{\bar{k}} \in  \{i, j\}$. 
Therefore, we assume without loss of generality that $k$ is adjacent to $P_i^+$. Let $\overline{k} \in P_i^+$ be a vertex that is adjacent to $k$ and lies along a minimum-length path in $G(P_j^+)$ spanning $c_j^+$ and $k$. Since $\overline{k} \in P_i^+\backslash P_i^{\mathcal{ID},-}$, we must have $\overline{k} \in \supscr{P_i}{add}(c_i^+)$ as constructed during the update, which implies $\frac{1}{s_i}d_{P_i^+}(\overline{k}, c_i^+) < \min\{\frac{1}{s_\ell}d_{P^+_\ell}(\overline{k}, c^+_\ell)|\ell \neq i, \bar{k} \in P_\ell^+\}$ and thus $\frac{1}{s_i}d_{P_i^+\cup\{k\}}(k, c_i^+) < \frac{1}{s_j}d_{P^+_j}(k, c^+_j)$. Since $T^-_j = 0$ and $\text{ID}^-_{k} = j$, Prop.~\ref{prop:set} implies $\frac{1}{s_i}d_{P_i^+\cup\{k\}}(k, c_i^+) <\frac{1}{s_j}d_{P^+_j}(k, c^+_j) = \min\{\frac{1}{s_\ell}d_{P^+_{\ell}}(k, c^+_\ell)|\ell \neq i, P_\ell^+\}$, contradicting $k \notin \supscr{P_i}{add}(c_i^+) \subset P_i^+$. Thus,  $d_{P_j^{\mathcal{ID},+}}(c_j^+, k)  = d_{P_j^{+}}(c_j^+, k)$, which, by inductive assumption, implies that $k$ contributes equally to the value of  both  $\mathcal{H}(c^+, P^{\mathcal{ID},+}, t_0)$ and $\mathcal{H}(c^+, P^+, t_0)$. We conclude that $\mathcal{H}(c^+, P^{\mathcal{ID},+}, t_0) = \mathcal{H}(c^+, P^+, t_0)$. Since $P$, $P^{\mathcal{ID}}$, and $c$ are static between updates, the statement follows by induction.
\end{proof}

\renewcommand*{\proofname}{Proof of Theorem~\ref{thm:convergence}}
\begin{proof}
The value of $\mathcal{H}(c,P,t)$ is static in between base station exchanges since $P$ and $c$ do not change in between updates. 
Consider an update occurring at $t = t_0$ involving agent $i$. Noting Prop.~\ref{prop:H_min}, we have $\mathcal{H}(c^+, P^+, t_0)\leq \mathcal{H}(c^-, P^{\mathcal{ID},-}, t_0) = \mathcal{H}(c^-, P^-, t_0)$. Thus, the value $\mathcal{H}(c, P, t)$ is non-increasing as $t\to \infty$. 
Since $\text{Cov}_m(Q)$ is finite, there must exist some time $t_0$ after which the value of  $\mathcal{H}$ is static. Consider fixed $t>t_0$ at which some agent $i$ communicates with the base. Since the value of $\mathcal{H}$ does not change during the update, Alg.~\ref{alg:partitioning} implies that $P^{\mathcal{ID}}$ and $c$ will be unchanged by the update. It follows that  $c$ and $P^{\mathcal{ID}}$ converge in finite time. Further, since $P^{\mathcal{ID}}_i \subseteq P_i$ for any $i$ (Prop.~\ref{prop:set}), we have $P_i^{\mathcal{ID},-} = P_i^{\mathcal{ID},+} = P^{\mathcal{ID},-}_i \cup \supscr{P_i}{add}(c_i^+) = P_i^+$. Noting the persistence of communication imposed by $\overline{\Delta}$, the same logic implies that after some finite time, $P$ and $P^{\mathcal{ID}}$ are concurrent.
 
We show that the limiting configuration is Pareto optimal. Consider $t_0$, such that for all $t>t_0$, $c$ and $P$ are static and $P$ is an $m$-partition of $Q$. Since timers $T_i$ are only reset when $P$ is altered, we assume without loss of generality that $T_i = 0$ for all $i \in \until m$ at any $t>t_0$. Suppose that agent $i$ communicates with the base station at time $t>t_0$. With these assumptions, Alg.~\ref{alg:partitioning} implies that there exists no $k \in  P_i$ such that $\sum_{h\in P_i} d_{P_i}(h,k)\Phi(h,t) < \sum_{h\in P_i} d_{P_i}(h,c_i)\Phi(h,t)$ (otherwise the value of $\mathcal{H}$ would be lowered by moving $c_i$). 

Similarly, for any $k \in P_j$ with $j \neq i$ that is adjacent to $P_i$, we must have $\frac{1}{s_i}d_{P_i\cup\{k\}}(c_i, k) \geq \frac{1}{s_j}d_{P_j}(c_j, k)$. Indeed, if this were not so, there would exist $k \in \supscr{P_i}{add}(c_i^+)\backslash P_i^-$, contradicting the convergence assumption, since $\supscr{P_i}{add}(c_i^+) = P_i^+$. As such, for any $i$, there is no $Q' \subset Q\backslash P_i$ such that $\sum_{k \in Q'} \frac{1}{s_i}d_{P_i \cup Q'}(c_i, k) < \sum_{k \in Q'} \min \{\frac{1}{s_j}d_{P_j}(c_j, k) | k \in P_j, j \neq i\}$, which implies statement (ii) of Def.~\ref{def:pareto}. 
\end{proof}
\renewcommand*{\proofname}{Proof of Theorem~\ref{thm:collision}}
\begin{proof}
By Assumption~\ref{assn:motion}, no agent ever leaves its assigned coverage region or enters its prohibited region, provided no abrupt changes to these regions occur during an update. Therefore, if no update ever occurs in which the vertex corresponding to the communicating agent's location is removed from the relevant agent's coverage region, then the statement is immediate. Suppose now that, at some time $t = t_0$, agent $i$,  whose location is associated with some $k \in P^{A,-}_i$, communicates with the base station and $k$ is removed, i.e., $k \notin P^{A,+}_i$. At time $t^+$, agent $i$ executes lines $5$ and $6$ of Alg.~\ref{alg:complete}. Thm.~\ref{thm:characteristics}, however,  guarantees that 1) there will exist a path in $G(P^{A,-}_i)$ between $k$ and the set $P_i^{\mathcal{ID},-}$, 2) that all vertices along this path belong to $\text{Proh}_{\text{ID}_k^+}\backslash \bigcup_{j \neq \text{ID}^+_k} P^+_j$ during the time period $(t_0, t_0+\frac{1}{s_i}d_{P^{A,-}_i}(k,P^{\mathcal{ID},-}_i)]$, and 3) $P_i^{\mathcal{ID},-} \subseteq P_i \coloneqq P_i^A$ over the same interval. Therefore, if agent $i$ immediately starts moving along the path, its location will lie exclusively within $\text{Proh}_{\text{ID}_k^+}$ until it reaches $P_i^{A,+}$. 

It remains to show that no agent $i$ ever enters $\text{Proh}_i(t)$. Whenever agent $i$ is executing lines $1-2$ of Alg.~\ref{alg:complete}, it follows readily that it will not enter $\text{Proh}_i(t)$. We  show that the same holds when agent $i$ is forced to execute line $5$ and $6$ of Alg.~\ref{alg:complete}. Without loss of generality, consider the update at time $t_0$ previously described. Since $k$ is re-assigned prior to the update at time $t_0$, we have $\text{Proh}^-_i = \emptyset$ (since vertices in $P_i$ cannot be claimed unless $T_i = 0$, implying $t_0 - \omega_i^{A,-} > \tau_i^{A,-}$). Using Prop.~\ref{prop:set}, we deduce that $T_{\text{ID}^+_k}^+ >  0$ and thus no vertices in $P_i^{A,-} \cap P_{\text{ID}^+_k}^+$ can belong to $\supscr{P_i}{$A$,pd,$+$}$, and no vertex on the constructed path between $k$ and $P_i^{\mathcal{ID},-}$ belongs to $\text{Proh}_i(t_0^+)$. Since $T_i^+>\tau_i^{A,+}> \frac{1}{s_i}d_{P^{A,-}_i}(k,P^{\mathcal{ID},-}_i)$, $\text{Proh}_i(\cdot)$ remains unchanged over the interval $(t_0, t_0+\frac{1}{s_i}d_{P^{A,-}_i}(k,P^{\mathcal{ID},-}_i)]$.
\end{proof}

\bibliographystyle{plain}
\bibliography{alias,FB,Main,main1,New}
\end{document}